\newcommand{\pname}[1]{\textnormal{\textsc{#1}}}
\newcommand{\cclass}[1]{\textnormal{\textsf{#1}}}
\newcommand{\HED}{\pname{$H$-free Edge Deletion}}
\newcommand{\HEC}{\pname{$H$-free Edge Completion}}
\newcommand{\HEE}{\pname{$H$-free Edge Editing}}
\newcommand{\HDED}{\pname{$H'$-free Edge Deletion}}
\newcommand{\HDEE}{\pname{$H'$-free Edge Editing}}
\newcommand{\HBEC}{\pname{$\overline{H}$-free Edge Completion}}
\newcommand{\HBEE}{\pname{$\overline{H}$-free Edge Editing}}
\newcommand{\TDDED}{\pname{$t$-diamond-free Edge Deletion}}
\newcommand{\TODDED}{\pname{$(t-1)$-diamond-free Edge Deletion}}
\newcommand{\RED}{\pname{$R$-free Edge Deletion}}
\newcommand{\REE}{\pname{$R$-free Edge Editing}}
\newcommand{\PTED}{\pname{$P_3$-free Edge Deletion}}
\newcommand{\PTEE}{\pname{$P_3$-free Edge Editing}}
\newcommand{\PFEE}{\pname{$P_4$-free Edge Editing}}
\newcommand{\CLEE}{\pname{$C_\ell$-free Edge Editing}}
\newcommand{\TWKTEE}{\pname{$2K_2$-free Edge Editing}}
\newcommand{\DED}{\pname{Diamond-free Edge Deletion}}
\newcommand{\DEE}{\pname{Diamond-free Edge Editing}}
\newcommand{\HDBEC}{\pname{$\overline{H'}$-free Edge Completion}}
\newcommand{\TSAT}{\pname{3-SAT}}
\newcommand{\NP}{\cclass{NP}}
\newcommand{\NPC}{\cclass{NP-complete}}
\newcommand{\PH}{$\cclass{NP} \subseteq \cclass{coNP/poly}$}
\newcommand{\edges}[1][V]{\left[#1\right]^2} 
\newtheorem{construction}{Construction}
\newtheorem{observation}[lemma]{Observation}
\newcommand{\defstage}[2]{
  \hfill\\\smallskip\noindent%
  \begin{tabularx}{\textwidth}{|l X|}%
    \hline%
    \multicolumn{2}{|l|}{\textbf{#1}}\\%
    &#2\\\hline%
  \end{tabularx}%
}
\begin{document}
\mainmatter              
\title{Parameterized Lower Bounds and Dichotomy Results for the NP-completeness of $H$-free Edge Modification Problems}
%
%
%
%
\author{N. R. Aravind\inst{1} \and R. B. Sandeep\inst{1}\thanks{supported by TCS Research Scholarship} \and Naveen Sivadasan\inst{2}}
\institute{Department of Computer Science \& Engineering\\
Indian Institute of Technology Hyderabad, India\\
\email{\{aravind,cs12p0001\}\makeatletter@\makeatother iith.ac.in}
\and
TCS Innovation Labs, Hyderabad, India\\
\email{naveen\makeatletter@\makeatother atc.tcs.com}}

\maketitle              
\begin{abstract}
For a graph $H$, the \HED\ problem asks whether there exist at most $k$ 
edges whose deletion from the input graph $G$ 
results in a graph without any induced copy of $H$. 
\HEC\ and \HEE\ are defined similarly where
only completion (addition) of edges are allowed in the former and 
both completion and deletion are allowed in the latter. 
We completely settle the classical complexities of these problems by proving that
\HED\ is \NPC\ if and only if $H$ is a graph with at least two edges, 
\HEC\ is \NPC\ if and only if $H$ is a graph with at least two non-edges and
\HEE\ is \NPC\ if and only if $H$ is a graph with at least three vertices.
Additionally, we prove that, these \NPC\ problems cannot be 
solved in parameterized subexponential time, i.e., 
in time $2^{o(k)}\cdot |G|^{O(1)}$, unless Exponential Time Hypothesis fails.
Furthermore, we obtain implications on the incompressibility of these 
problems.
\end{abstract}
\section{Introduction}
\label{sec:introduction}

Edge modification problems are to test whether modifying at most $k$ edges
makes the input graph satisfy certain properties. The three major edge modification
problems are edge deletion, edge completion and edge editing problems.
In edge deletion problems we are allowed to delete at most $k$ edges from
the input graph. Similarly, in completion problems, it is allowed to
complete (add) at most $k$ edges and in editing problems at most
$k$ editing (deletion or completion) are allowed. Edge modification problems
comes under the broader category of graph modification problems which have
found applications in DNA physical mapping 
\cite{goldberg1995four},
numerical algebra \cite{rose1972graph}, circuit design \cite{el1988complexity}
and machine learning \cite{DBLP:journals/ml/BansalBC04}.

The focus of this paper is on $H$-free edge modification problems,
in which we are allowed to modify at most $k$ edges to make the input
graph devoid of any induced copy of $H$, where $H$ is any fixed graph.
Though these problems have been studied for four decades, a complete
dichotomy result on the classical complexities of these problems are not yet found.
We settle this by proving that \HED\ is \NPC\ if and only if $H$ is a graph with at least two edges, 
\HEC\ is \NPC\ if and only if $H$ is a graph with at least two non-edges and
 \HEE\ is \NPC\ if and only if $H$ is a graph with at least three vertices.
As a bonus, we obtain the parameterized lower bounds for these \NPC\ problems.
We obtain that these \NPC\ problems cannot be solved in parameterized 
subexponential time (i.e., in time $2^{o(k)}\cdot |G|^{O(1)}$), unless
Exponential Time Hypothesis (ETH) fails, where ETH is a widely believed complexity theoretic
assumption. Furthermore, we obtain implications on the incompressibility 
(non-existence of polynomial kernels) of these problems.

We build on our recent paper \cite{DBLP:conf/cocoa/AravindSS15},
in which we proved that \HED\ is \NPC\ if $H$ has at least two edges and 
has a component with
maximum number of vertices which is a tree or a regular graph.
We also proved that these problems cannot be solved in parameterized 
subexponential time, unless ETH fails.

\paragraph{\textbf{Related Work:}}
In 1981, Yannakakis proved that \HED\ is \NPC\
if $H$ is a cycle~\cite{DBLP:journals/siamcomp/Yannakakis81}. 
Later in 1988, El-Mallah and Colbourn proved that the problem
is \NPC\ if $H$ is a path of at least two edges \cite{el1988complexity}.
Addressing the fixed parameter tractability of a generalized version of these problems,
Cai proved that \cite{DBLP:journals/ipl/Cai96}~\HED, \textsc{Completion} and \textsc{Editing}
are fixed parameter tractable, i.e., they can be solved in time
$f(k)\cdot |G|^{O(1)}$, for some function $f$. 
Polynomial kernelizability of these problems
have been studied widely. 
Given an instance $(G,k)$ of the problem the objective is to 
obtain in polynomial time an equivalent instance of size
polynomial in $k$. Kratsch and Wahlstr{\"o}m gave the first result on the incompressibility of 
$H$-free edge modification problems. They proved that \cite{kratsch2009two} for a certain graph $H$
on seven vertices, \HED\ and \HEE\ do not admit polynomial kernels, unless \PH. They use
polynomial parameter transformation from an \NPC\ problem and hence their results imply the NP-completeness
of these problems. Later, Cai and Cai proved that 
\HEE, \textsc{Deletion} and \textsc{Completion} do not admit polynomial kernels
if $H$ is a path or a cycle with at least four edges, unless \PH~\cite{DBLP:journals/algorithmica/CaiC15}.
Further, they proved that \HEE\ and \textsc{Deletion} are incompressible if $H$ is 3-connected but not complete, 
and \HEC\ is incompressible if $H$ is 3-connected and has at least two non-edges,
unless \PH~\cite{DBLP:journals/algorithmica/CaiC15}.
Under the same assumption, 
it is proved that \HED\ and \HEC\ are incompressible if $H$ is a tree on at least 7 vertices, which is not a star graph and
\HED\ is incompressible if $H$ is the star graph $K_{1,s}$, where $s\geq 10$~\cite{cai2012polynomial}.
They also use polynomial parameter transformations and hence these problems
are \NPC. 


\paragraph{\textbf{Outline of the Paper:}}
Section~\ref{sec:prebasics} gives the notations and terminology used in the paper.
It also introduces a construction which is a modified version of the main
construction used in \cite{DBLP:conf/cocoa/AravindSS15}.
Section~\ref{sec:editing} settles the case of \HEE.
Section~\ref{sec:deletion} obtains results for \HED\ and \textsc{Completion}.
In the concluding section, we discuss the implications of our results on the 
incompressibility of $H$-free edge modification problems. 
%
\section{Preliminaries and Basic Tools}
\label{sec:prebasics}

\paragraph{\textbf{Graphs:}}

For a graph $G$, $V(G)$ denotes the vertex set and 
$E(G)$ denotes the edge set. We denote the symmetric
difference operator by $\triangle$, i.e., for two sets $F$ and $F'$,
$F\triangle F'=(F\setminus F') \cup (F'\setminus F)$.
For a graph $G$ and a set $F\subseteq \edges[V(G)]$, $G\triangle F$
denotes the graph $(V(G),E(G)\triangle F)$. A component of a graph is largest
if it has maximum number of vertices. By $|G|$ we denote $|V(G)|+|E(G)|$.
The disjoint union of two graphs $G$ and $G'$ is denoted by $G\cup G'$
and the disjoint union of $t$ copies of $G$ is denoted by $tG$.
A simple path on $t$ vertices is denoted by $P_t$.
The graph $t$-diamond is $K_2+tK_1$, the join of $K_2$ and $tK_1$. 
Hence, $2$-diamond is the diamond graph.
The minimum degree of a graph $G$ is denoted by $\delta(G)$
and the maximum degree is denoted by $\Delta(G)$.
Degree of a vertex $v$ in a graph $G$ is denoted by $\deg_G(v)$.
We remove the subscript when there is no ambiguity.
We denote the complement of a graph $G$ by $\overline{G}$.
For a graph $H$ and a vertex set $V'\subseteq V(H)$,
$H[V']$ is the graph induced by $V'$ in $H$.
A null graph is a graph without any edge.

For integers $\ell$ and $h$ such that $h>\ell$, $(\ell,h)$-degree graph is a graph in which 
every vertex has degree either $\ell$ or $h$. The set of vertices with degree $\ell$
is denoted by $V_{\ell}$ and the set of vertices with degree $h$ is denoted by $V_h$. 
An $(\ell,h)$-degree graph is called \emph{sparse} if $V_l$ induces a graph with at most 
one edge and $V_h$ induces a graph with at most one edge.

The context determines whether \HED\ denotes the classical problem or the parameterized
problem. This applies to \textsc{Completion} and \textsc{Editing} problems.
For the parameterized problems, 
we use $k$ (the size of the solution being sought) as the parameter.
In this paper, edge modification implies either deletion, completion or editing.

\paragraph{\textbf{Technique for Proving Parameterized Lower Bounds:}}
Exponential Time Hypothesis (ETH) is a widely believed complexity theoretic
assumption that \TSAT\ 
cannot be solved in time $2^{o(n)}$, where 
$n$ is the number of variables in
the \TSAT\ instance. 
A linear parameterized reduction is a polynomial time reduction
from a parameterized problem $A$ to a parameterized problem $B$
such that for every instance $(G,k)$ of $A$, the reduction gives
an instance $(G',k')$ such that $k'=O(k)$. The following
result helps us to obtain parameterized lower bound under ETH.

\begin{proposition}[\cite{DBLP:books/sp/CyganFKLMPPS15}]
  \label{pro:lpr}
  If there is a linear parameterized reduction from a parameterized problem $A$
  to a parameterized problem $B$ and if $A$ does not admit a parameterized subexponential
  time algorithm, then $B$ does not admit a parameterized subexponential time algorithm.
\end{proposition}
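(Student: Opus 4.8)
The plan is to prove the contrapositive: assuming that $B$ can be solved in time $2^{o(k')}\cdot|G'|^{O(1)}$, I will construct a parameterized subexponential time algorithm for $A$, contradicting the hypothesis. So suppose $\mathcal{B}$ is an algorithm deciding $B$ in time $2^{o(k')}\cdot|G'|^{O(1)}$ on instances $(G',k')$. Given an instance $(G,k)$ of $A$, the algorithm for $A$ first applies the linear parameterized reduction to obtain an instance $(G',k')$ of $B$ that is a yes-instance of $B$ if and only if $(G,k)$ is a yes-instance of $A$, then runs $\mathcal{B}$ on $(G',k')$ and returns its output. Correctness is immediate from the defining property of a reduction.

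It then remains to bound the running time. Since the reduction runs in polynomial time, its output has size polynomial in the input size, so $|G'|\le |G|^{O(1)}$, and the reduction itself costs $|G|^{O(1)}$ time. By the definition of a linear parameterized reduction, $k'\le c\cdot k$ for some constant $c$. Substituting these bounds into the running time of $\mathcal{B}$, the cost of the call to $\mathcal{B}$ is $2^{o(k')}\cdot|G'|^{O(1)}=2^{o(k)}\cdot|G|^{O(1)}$: the factor $|G'|^{O(1)}$ becomes $|G|^{O(1)}$ because a polynomial of a polynomial is a polynomial, and $2^{o(k')}=2^{o(k)}$ because $2^{o(k')}$ means that for every $\varepsilon>0$ the exponent is eventually at most $\varepsilon k'\le \varepsilon c\, k$, and $\varepsilon c$ can be made arbitrarily small. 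Adding the polynomial cost of the reduction, the total running time is $2^{o(k)}\cdot|G|^{O(1)}$, so $A$ admits a parameterized subexponential time algorithm, a contradiction. Hence $B$ admits no such algorithm.

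The argument has essentially no obstacle; the only point requiring a line of care is the composition of the asymptotic bounds, i.e.\ verifying that plugging $k'=O(k)$ into $2^{o(k')}$ and $|G'|=|G|^{O(1)}$ into $|G'|^{O(1)}$ still yields $2^{o(k)}\cdot|G|^{O(1)}$. This is precisely the reason the reduction is required to be \emph{linear} in the parameter: a mere polynomial blow-up $k'=\operatorname{poly}(k)$ would not preserve the subexponential form of the bound.
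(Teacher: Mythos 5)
Your proof is correct: the contrapositive composition argument, with the observation that $k'=O(k)$ turns $2^{o(k')}$ into $2^{o(k)}$ and that polynomial size blow-up composes harmlessly into $|G|^{O(1)}$, is exactly the standard argument for this fact. The paper itself gives no proof (it cites the result from the parameterized complexity textbook), and your write-up matches that standard treatment.
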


Two parameterized problems $A$ and $B$ are linear parameter equivalent if there is a linear 
parameterized reduction from $A$ to $B$ and there is a linear parameterized reduction from 
$B$ to $A$. We refer the book \cite{DBLP:books/sp/CyganFKLMPPS15} for various aspects of 
parameterized algorithms and complexity.
The following are some folklore observations.

\begin{proposition}
  \label{pro:folklore}
  \HED\ and \HBEC\ are linear parameter equivalent. Similarly,
  \HEE\ and \HBEE\ are linear parameter equivalent.
\end{proposition}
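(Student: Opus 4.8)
The plan is to use graph complementation as the reduction, in both directions and for both parts. The one identity I would record first is
\[
\overline{G\triangle F}=\overline{G}\triangle F,
\]
valid for every graph $G$ and every $F\subseteq\edges[V(G)]$; it follows immediately from $E(\overline{G})=\edges[V(G)]\setminus E(G)$ and the definition of $\triangle$. Combined with the elementary observation that $\overline{G}[S]=\overline{G[S]}$ for every $S\subseteq V(G)$ — so that $G[S]\cong H$ if and only if $\overline{G}[S]\cong\overline{H}$, and hence $G$ is $H$-free if and only if $\overline{G}$ is $\overline{H}$-free — this identity handles all cases uniformly.

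For \HED\ versus \HBEC: map an instance $(G,k)$ of \HED\ to the instance $(\overline{G},k)$ of \HBEC. If $F\subseteq E(G)$, then $G\triangle F$ is obtained from $G$ by deleting the edges in $F$, while $\overline{G}\triangle F$ is obtained from $\overline{G}$ by adding those same pairs (which are non-edges of $\overline{G}$). By the identity above, $G\triangle F$ is $H$-free if and only if $\overline{G}\triangle F$ is $\overline{H}$-free; hence a set of at most $k$ edges whose deletion makes $G$ into an $H$-free graph is exactly a set of at most $k$ non-edges whose addition makes $\overline{G}$ into an $\overline{H}$-free graph. Since $k'=k$ the reduction is linear, and it clearly runs in polynomial time. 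Because complementation is an involution and $\overline{\overline{H}}=H$, the very same construction gives a linear parameterized reduction from \HBEC\ back to \HED.

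For \HEE\ versus \HBEE: map $(G,k)$ of \HEE\ to $(\overline{G},k)$ of \HBEE. For an arbitrary $F\subseteq\edges[V(G)]$ the identity gives that $G\triangle F$ is $H$-free if and only if $\overline{G}\triangle F$ is $\overline{H}$-free, so the two instances have precisely the same solution sets, and in particular the same minimum solution size; the parameter is unchanged and the map is again its own inverse.

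There is essentially no obstacle here — which is precisely why the statement is folklore. The only point that needs to be stated carefully is that the correspondence between solutions is exact (not merely a correspondence of yes/no answers), since this is what makes the reduction parameter-preserving; this is immediate once the identity $\overline{G\triangle F}=\overline{G}\triangle F$ and the closure property of complements under induced subgraphs are in place.
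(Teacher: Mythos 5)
Your proof is correct: the paper states this proposition as folklore without giving a proof, and your complementation argument via the identity $\overline{G\triangle F}=\overline{G}\triangle F$ together with $\overline{G}[S]=\overline{G[S]}$ is exactly the standard argument the paper implicitly relies on. The reductions preserve the parameter ($k'=k$), run in polynomial time, and are self-inverse, so both equivalences (\HED\ with \HBEC, and \HEE\ with \HBEE) follow as you claim.
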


\begin{proposition}
  \label{pro:equivalence}
  \begin{enumerate}[(i)]
  \item\label{item:dc-equivalence} \HED\ is \NPC\ if and only if \HBEC\ is \NPC.
    Furthermore, \HED\ cannot be solved in parameterized subexponential time
    if and only if \HBEC\ cannot be solved in parameterized subexponential time.
  \item\label{item:ee-equivalence} \HEE\ is \NPC\ if and only if \HBEE\ is \NPC.
    Furthermore, \HEE\ cannot be solved in parameterized subexponential time
    if and only if \HBEE\ cannot be solved in parameterized subexponential time.
  \end{enumerate}
\end{proposition}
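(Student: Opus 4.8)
The plan is to derive Proposition~\ref{pro:equivalence} directly from Proposition~\ref{pro:folklore} together with Proposition~\ref{pro:lpr}, since the content is essentially that a linear parameterized reduction is in particular a polynomial-time many-one reduction and, by Proposition~\ref{pro:lpr}, also transports the (non-)existence of subexponential-time algorithms. The only genuine preliminary is to observe that all four problems (\HED, \HBEC, \HEE, \HBEE) lie in \NP: a certificate is a set $F$ of at most $k$ pairs of vertices (to be deleted, added, or edited), and one checks in polynomial time that $G\triangle F$ contains no induced copy of $H$ by brute-force over all $|V(H)|$-subsets of $V(G)$. Hence for each of these problems, \NPC\ is equivalent to \NPH, so it suffices to argue the hardness direction transfers.

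First I would recall the concrete reduction underlying Proposition~\ref{pro:folklore}: given an instance $(G,k)$ of \HED, the pair $(\overline{G},k)$ is an instance of \HBEC, and $G\triangle F$ is $H$-free precisely when $\overline{G\triangle F}=\overline{G}\triangle F$ is $\overline H$-free; moreover deleting an edge of $G$ corresponds exactly to adding the corresponding non-edge of $\overline G$. Thus $(G,k)$ is a yes-instance of \HED\ iff $(\overline G,k)$ is a yes-instance of \HBEC, and vice versa. Since $k$ is unchanged, this is a linear parameterized reduction in both directions, so \HED\ and \HBEC\ are linear parameter equivalent; the identical argument with $H$ replaced by an arbitrary graph in the role played above (and noting editing is self-dual under complementation in the same way) gives the linear parameter equivalence of \HEE\ and \HBEE. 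This is exactly the statement of Proposition~\ref{pro:folklore}, which I may simply cite.

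Next, for part~(\ref{item:dc-equivalence}): if \HED\ is \NPC\ then it is \NPH; composing with the polynomial-time reduction from \HED\ to \HBEC\ furnished above shows \HBEC\ is \NPH, and since \HBEC\ $\in\NP$ it is \NPC. The converse is symmetric using the reduction in the other direction. For the subexponential-time assertion, apply Proposition~\ref{pro:lpr} to the linear parameterized reduction from \HED\ to \HBEC: if \HED\ admits no parameterized subexponential-time algorithm then neither does \HBEC; the reverse implication uses the linear parameterized reduction from \HBEC\ to \HED. Part~(\ref{item:ee-equivalence}) is obtained in exactly the same way, using the linear parameter equivalence of \HEE\ and \HBEE\ from Proposition~\ref{pro:folklore}.

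There is no substantial obstacle here; the statement is a bookkeeping consequence of the complementation reduction and Proposition~\ref{pro:lpr}. The only point requiring a word of care is keeping track of the direction of the reductions when invoking Proposition~\ref{pro:lpr} (one needs the reduction \emph{into} the problem whose lower bound is being inferred), which is why it is convenient that Proposition~\ref{pro:folklore} supplies reductions in both directions.
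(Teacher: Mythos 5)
Your proposal is correct and is exactly the intended argument: the paper states Proposition~\ref{pro:equivalence} (together with Proposition~\ref{pro:folklore}) as folklore without an explicit proof, and the route you give — complementation yields a parameter-preserving polynomial reduction in both directions, NP-membership is trivial, and Proposition~\ref{pro:lpr} transfers the subexponential-time lower bounds — is precisely what the paper relies on. No gaps; nothing further is needed.
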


\begin{proposition}
  \label{pro:polynomial}
  \begin{enumerate}[(i)]
  \item\label{item:poly-deletion} \HED\ is polynomial time solvable if $H$ is a graph with
    at most one edge.
  \item\label{item:poly-completion} \HEC\ is polynomial time solvable if $H$ is a graph with
    at most one non-edge.
  \item\label{item:poly-editing} \HEE\ is polynomial time solvable if $H$ is a graph with
  at most two vertices.
  \end{enumerate}
\end{proposition}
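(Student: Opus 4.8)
The plan is to isolate a single non-trivial phenomenon — that when the forbidden graph $H$ has exactly one edge, that edge is forced into every deletion solution — and to treat everything else as bookkeeping over a constant number of shapes of $H$.

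Part (ii) and part (iii) I would dispose of quickly. For (ii): complementing the input graph is a polynomial-time reduction turning \HEC\ with pattern $H$ into the deletion problem with pattern $\overline{H}$ (cf.\ Proposition~\ref{pro:folklore}), and $H$ has at most one non-edge exactly when $\overline{H}$ has at most one edge, so a polynomial-time algorithm for part (i) immediately yields one for part (ii). For (iii): a graph on at most two vertices is, apart from the empty graph (for which every graph contains an induced copy, so the answer is always ``no''), one of $K_1$, $2K_1$, $K_2$; the $H$-free graphs are then the graphs with no vertex, the complete graphs, and the edgeless graphs respectively, so the minimum number of edits is, respectively, $\infty$ (unless $G$ has no vertex, when it is $0$), the number of non-edges of $G$, and $|E(G)|$ — each computable in polynomial time and compared against $k$.

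For part (i): if $H$ has no edge then $H = tK_1$, and an induced copy of $H$ in a graph is just an independent set of size $t$; since deleting an edge never destroys an independent set, deletions cannot destroy an induced copy of $H$, so $G$ can be made $H$-free by deletions if and only if $G$ is already $H$-free, which is decided in $O(|V(G)|^t)$ time. If $H$ has exactly one edge then $H = K_2 \cup tK_1$ for some $t \ge 0$, and the key is the following forcing fact: in any induced copy of $H$, on vertex set $S$, the induced subgraph has a single edge $e$, and deleting edges other than $e$ leaves that induced copy intact, since the remaining $t$ vertices of $S$ still form an independent set non-adjacent to $e$. Hence, if $F' \subseteq F$ and $G - F'$ has an induced copy of $H$, then — because $G - F \subseteq G - F'$ keeps every non-edge of that copy a non-edge — the copy persists in $G - F$ unless its unique edge lies in $F$; in particular, if $G - F$ is $H$-free, the edge of that copy belongs to $F$.

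This yields the algorithm: starting from $F = \emptyset$, repeatedly find an induced copy of $H$ in $G - F$ — by brute force over the $\binom{|V(G)|}{t+2}$ candidate vertex sets — and add its unique edge to $F$, stopping when $G - F$ is $H$-free or $|F| > k$. The forcing fact maintains the invariant that $F$ is contained in every solution, so the loop ends either with $G - F$ being $H$-free and $F$ a minimum solution (output ``$|F| \le k$''), or with $|F| > k$ and no solution of size at most $k$; it runs for at most $|E(G)|$ rounds, so the procedure is polynomial for fixed $H$. I do not foresee a genuine obstacle: the only non-routine step is the forcing fact, whose one subtlety is that it must be applied relative to the eventual solution $F$ (edge-deletions keep non-edges non-edges, so an obstruction disappears only when its edge is removed). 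Note that this is exactly where the hypothesis ``at most one edge'' is used — with two or more edges no edge of a copy is forced, matching the \NPC\ results proved elsewhere in the paper.
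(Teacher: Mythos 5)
Your proposal is correct. The paper states Proposition~\ref{pro:polynomial} as a folklore observation and gives no proof at all, so there is nothing to compare against; your write-up simply supplies the missing argument. The only non-routine ingredient is your forcing fact for part (i) when $H=K_2\cup tK_1$: since deletions preserve non-edges, an induced copy of $H$ can only be destroyed by deleting its unique edge, so that edge lies in every solution, and the greedy loop maintains the invariant that the accumulated set $F$ is contained in every solution while remaining polynomial for fixed $H$ --- this is sound, as are the edgeless case of (i) (deletions cannot destroy an independent set), the complementation reduction for (ii) (consistent with Proposition~\ref{pro:folklore}), and the finite case analysis for (iii).
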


In this paper, we prove that these are the only polynomial time solvable 
$H$-free edge modification problems.
For any fixed graph $H$, the $H$-free edge modification problems trivially belong to \NP. Hence, we may 
state that these problems are \NPC\ by proving their NP-hardness.
\subsection{Basic Tools}

The following construction is a slightly modified version of the main 
construction used in \cite{DBLP:conf/cocoa/AravindSS15}.
The modification is done to make it work for reductions of \textsc{Completion}
and \textsc{Editing} problems. 
The input of the construction is a tuple $(G',k,H,V')$, where $G'$ and $H$ are
graphs, $k$ is a positive integer and $V'\subseteq V(H)$. In the old construction
(Construction 1 in \cite{DBLP:conf/cocoa/AravindSS15}), for every copy $C$ of $H[V']$
in $G'$, we introduced $k+1$ copies of $H$ such that the intersection of every pair
of them is $C$. In the modified construction given below, we do the same for every copy $C$
of $H[V']$ on a complete graph on $V(G')$. 


\begin{construction}
  \label{con:nonadj}
  Let $(G',k, H, V')$ be an input to the construction, where $G'$ and $H$ are graphs, $k$
  is a positive integer and $V'$ is a subset of vertices of $H$.
  Label the vertices of $H$ such that every vertex gets a unique label. Let the labelling be $\ell_H$.
  Consider a complete graph $K'$ on $V(G')$. 
  For every subgraph (not necessarily induced) $C$ with a vertex set $V(C)$ 
  and an edge set $E(C)$ in $K'$ such that $C$ is isomorphic to $H[V']$,
  do the following:
  \begin{itemize}
    \item Give a labelling $\ell_C$ for the vertices in $C$ such that there is an isomorphism
      $f$ between $C$ and $H[V']$ which maps every vertex $v$ in $C$ to a vertex $v'$ in $H[V']$
      such that $\ell_C(v)=\ell_H(v')$, i.e., $f(v)=v'$ if and only if $\ell_C(v)=\ell_H(v')$.
    \item Introduce $k+1$ sets of vertices $V_1, V_2,\ldots, V_{k+1}$, each of size $|V(H)\setminus V'|$.
    \item For each set $V_i$, introduce an edge set $E_i$ of size $|E(H)\setminus E(H[V'])|$ among
      $V_i\cup V(C)$
       such that there is an isomorphism $h$ between $H$ and
      $(V(C)\cup V_i, E(C)\cup E_i)$ which preserves $f$, i.e.,
      for every vertex $v\in V(C)$, $h(v)=f(v)$.
  \end{itemize}
  This completes the construction. Let the constructed graph be $G$. 
\end{construction}

We remark that the complete graph $K'$ on $V(G')$ is not part of the constructed graph.
The complete graph is only used to find where we need to introduce new vertices and edges.
An example of the construction is shown in Figure~\ref{fig:cons}.
We use the terminology used in \cite{DBLP:conf/cocoa/AravindSS15}. We repeat it here for convenience.
Let $C$ be a copy of $H[V']$ in $K'$. Then, $C$ is called a \emph{base}.
Let $\{V_i\}$ be the $k+1$ sets of vertices introduced in the construction for the base $C$.
Then, each $V_i$ is called a \emph{branch} of $C$ and the vertices in $V_i$ are called 
the \emph{branch vertices} of $C$. If $V_j$ is a branch of $C$, then
the vertex set of $C$ is denoted by $B_j$.
The vertex set of $G'$ in $G$ is denoted by $V_{G'}$. The copy of $H$ formed by $V_j$, $E_j$
and $C$ is denoted by $H_j$. Since $H$ is a fixed graph, the construction runs in polynomial time. 
The following two Lemmas are the generalized version of Lemma 2.3 and 3.5 of \cite{DBLP:conf/cocoa/AravindSS15}.

\begin{figure}[h]
  \centering
  \subfloat[Subfigure 1 list of figures text][$G'$]{
    \includegraphics[width=1.5in]{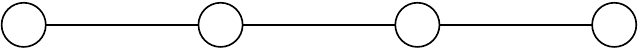}
    \label{fig:gdash}}
  \qquad
  \subfloat[Subfigure 2 list of figures text][$H$. The vertices in $V'$ are blackened.]{
    \includegraphics[width=1.0in]{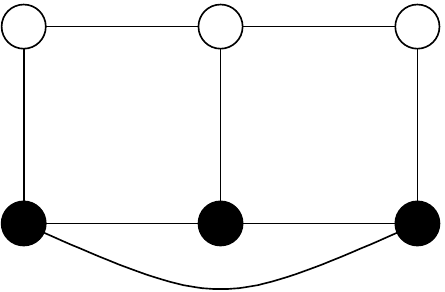}
    \label{fig:h}}
  \qquad
  \subfloat[Subfigure 2 list of figures text][Output of Construction~\ref{con:nonadj} with an input $(G',k=1,H,V')$.]{
    \includegraphics[width=1.5in]{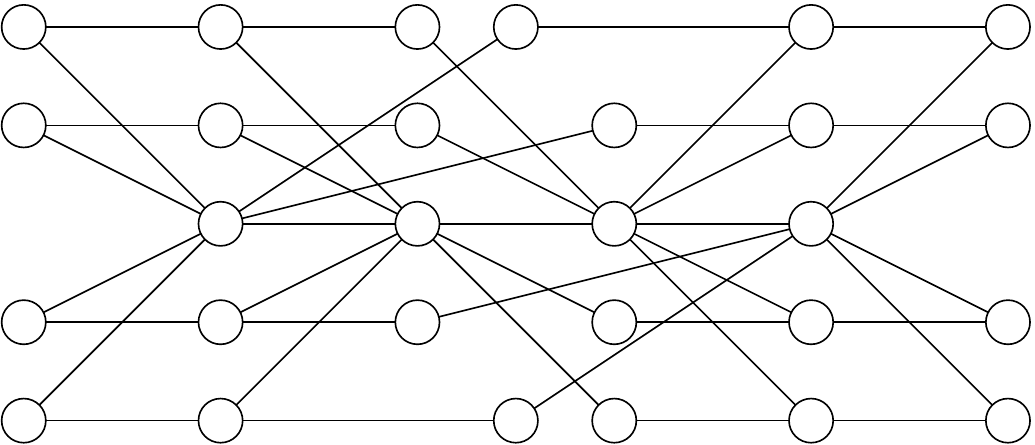}
    \label{fig:g}}
  \caption{An example of Construction~\ref{con:nonadj}}
  \label{fig:cons}
\end{figure}

\begin{lemma}
  \label{lem:con:nonadj-backward}
  Let $G$ be obtained by Construction~\ref{con:nonadj} on 
  the input $(G',k,H,V')$, where $G'$ and $H$ are graphs, $k$ is a positive integer and $V'\subseteq V(H)$.
  Then, if $(G,k)$ is a yes-instance of \HEE~(\textsc{Deletion}/\textsc{Completion}), 
  then $(G',k)$ is a yes-instance of \HDEE~(\textsc{Deletion}/\textsc{Completion}), 
  where $H'$ is $H[V']$.
\end{lemma}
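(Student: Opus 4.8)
The plan is to prove the contrapositive-friendly direction by taking a hypothetical solution for the constructed instance $(G,k)$ and ``restricting'' it to $V_{G'}$. Concretely, suppose $F\subseteq\edges[V(G)]$ with $|F|\le k$ is such that $G\triangle F$ is $H$-free (and, in the deletion/completion variants, $F$ only deletes/only adds edges). Set $F'=F\cap\edges[V_{G'}]$; this is the candidate solution for $G'$, it has size at most $k$, and it respects the deletion/completion restriction. We must show $G'\triangle F'$ is $H'$-free, where $H'=H[V']$. Note that $G'\triangle F'$ and $G'$ differ only on pairs inside $V_{G'}$, and $F'$ touches no branch vertices, so the branch gadgets of $G$ are unaffected by $F'$.

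First I would argue by contradiction: suppose $C$ is an induced copy of $H'=H[V']$ in $G'\triangle F'$, on a vertex set $W\subseteq V_{G'}$. Then $C$, viewed as a (not necessarily induced) subgraph of the complete graph $K'$ on $V(G')$, is one of the bases for which the construction created $k+1$ branches $V_1,\dots,V_{k+1}$ with edge sets $E_1,\dots,E_{k+1}$, each completing $C$ to an induced copy $H_i$ of $H$ in $G$. The key counting step: since $|F|\le k$, there is at least one index $i$ such that $F$ is disjoint from the ``private'' part of $H_i$ — namely from $E_i$ and from all pairs incident to the branch vertices $V_i$ (these private pairs are pairwise disjoint across the $k+1$ branches, being incident to disjoint branch-vertex sets, so $F$ can intersect at most $k$ of them). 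For that $i$, the only pairs of $H_i$ that $F$ could possibly touch lie inside $V(C)=W\subseteq V_{G'}$, and there $F$ acts exactly as $F'$ does.

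Now the heart of the matter: in $G\triangle F$, the copy $H_i$ sits on $W\cup V_i$; on the edges inside $W$ it looks exactly like $C$ in $G'\triangle F'$ (which is an induced copy of $H'$), and on the edges touching $V_i$ it is untouched by $F$, hence looks exactly like $H_i$ did in $G$. Since $H_i$ in $G$ is an induced copy of $H$ with $V_i$ playing the role of $V(H)\setminus V'$ and $W$ playing the role of $V'$ (via the isomorphism $h$ preserving $f$), and since $C$ being induced-isomorphic to $H[V']$ matches exactly the adjacencies $H$ requires among the $V'$-vertices, the subgraph of $G\triangle F$ induced on $W\cup V_i$ is isomorphic to $H$. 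This contradicts $H$-freeness of $G\triangle F$, completing the proof.

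I expect the main obstacle to be the bookkeeping in the last step: one must verify that no \emph{spurious} edges of $G$ between $W$ and $V_i$, or among $V_i$, were introduced in building other bases' gadgets — i.e.\ that the induced subgraph on $W\cup V_i$ in $G$ really is just $H_i$ and nothing more. This follows from the construction, since branch vertices of a base $C$ are fresh vertices used only for $C$, so their only neighbours are within $V(C)\cup V_i$ as dictated by $E_i$; but stating this cleanly (and the parallel claim that $F$ restricted to $W$ agrees with $F'$) is where the care is needed. The counting argument selecting a ``clean'' branch $i$ is the same pigeonhole used in \cite{DBLP:conf/cocoa/AravindSS15} and is routine once the disjointness of the private parts is observed.
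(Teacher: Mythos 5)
Your overall strategy is the same as the paper's: restrict $F$ to the pairs inside $V_{G'}$, note that any induced copy of $H'$ in $G'\triangle F'$ on a vertex set $W$ is (with its edge set) a base $C$ of Construction~\ref{con:nonadj} -- which is exactly why the construction is carried out on the complete graph $K'$ rather than on $G'$ -- then play $|F|\le k$ against the $k+1$ branches of $C$ to find a ``clean'' branch $V_i$, and conclude that $W\cup V_i$ induces $H$ in $G\triangle F$, contradicting $H$-freeness. Your reconstruction step, including the observation that branch vertices of $C$ have no neighbours outside $V(C)\cup V_i$ and that $(G\triangle F)[W]=(G'\triangle F')[W]$, matches the paper's (much terser) argument.

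There is, however, a slip in the pigeonhole step as you state it, and it matters precisely for the editing and completion variants that the lemma covers. The families of ``private'' pairs you use -- all pairs incident to $V_i$ -- are \emph{not} pairwise disjoint across branches: a single pair of $F$ may have one endpoint in $V_i$ and the other in $V_j$ with $i\neq j$, and since in editing/completion $F$ need not be a subset of $E(G)$, such cross-branch pairs are entirely possible. Each pair of $F$ can thus spoil two branches, so $|F|\le k$ only bounds the number of spoiled branches by $2k\ge k+1$, and a branch with no $F$-pair incident to it at all need not exist. (For pure deletion your counting happens to be fine, since $F\subseteq E(G)$ and $G$ has no edges between distinct branches.) The repair is immediate and is what the contradiction actually requires: it suffices to find a branch $V_i$ such that $F$ contains no pair with both endpoints in $W\cup V_i$ and at least one endpoint in $V_i$. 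These restricted families \emph{are} pairwise disjoint over the $k+1$ branches -- a pair lying in two of them would need an endpoint in $V_i$ and an endpoint in $V_j$, and then it is not contained in $W\cup V_i$ -- so at most $k$ branches are spoiled and one survives. With this one-line correction your proof coincides with the paper's.
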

\begin{proof}
  Let $F$ be a solution of size at most $k$ of $(G,k)$. For a contradiction, assume that $G'\triangle F$
  has an induced $H'$ with a vertex set $U$. Hence there is a base $C$ in $G'$ isomorphic to
  $H'$ with the vertex set $V(C)=U$. Since there are $k+1$ copies of $H$ in $G$, where each pair
  of copies of $H$ has the intersection $C$, and $|F|\leq k$, operating with $F$ cannot kill all the copies of 
  $H$ associated with $C$. Therefore,
  since $U$ induces an $H'$ in $G'\triangle F$, there exists a branch $V_i$ of $C$ such that $U\cup V_i$
  induces $H$ in $G\triangle F$, which is a contradiction.\qed
\end{proof}


\begin{lemma}
  \label{lem:degree}
  Let $H$ be any graph and $d$ be any integer. Let $V'$ be the set of vertices
  in $H$ with degree more than $d$.
  Let $H'$ be $H[V']$. Then, there is a linear parameterized reduction
  from \HDEE~(\textsc{Deletion}/\textsc{Completion}) to \HEE~(\textsc{Deletion}/\textsc{Completion}).
\end{lemma}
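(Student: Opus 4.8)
The plan is to take Construction~\ref{con:nonadj} itself as the reduction: given an instance $(G',k)$ of \HDEE~(\textsc{Deletion}/\textsc{Completion}) with $H'=H[V']$, run the construction on the input $(G',k,H,V')$ to obtain a graph $G$ and output $(G,k)$. Since $H$ is fixed the construction is polynomial, and the parameter is unchanged, so it only remains to prove that $(G',k)$ and $(G,k)$ are equivalent instances. One direction is free: if $(G,k)$ is a yes-instance then $(G',k)$ is a yes-instance by Lemma~\ref{lem:con:nonadj-backward}. The content lies in the converse, for which I would show that any solution $F$ with $|F|\le k$ for $(G',k)$ — which, whatever the variant, consists only of vertex pairs inside $V_{G'}$ and is therefore also a valid deletion/completion/editing set for $(G,k)$ — satisfies that $G\triangle F$ is $H$-free.

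The key observation is a degree bound on the branch vertices. By the construction, every branch vertex belongs to a unique branch $V_i$ of a unique base $C$, all of its edges in $G$ belong to the set $E_i$ realizing the copy $H_i\cong H$ on $V(C)\cup V_i$, and the isomorphism maps $V_i$ onto $V(H)\setminus V'$; hence every branch vertex $v$ satisfies $\deg_G(v)\le d$, because the vertices of $V(H)\setminus V'$ have degree at most $d$ in $H$ by the definition of $V'$. Since $F\subseteq\edges[V_{G'}]$ and no edge of any $E_i$ lies inside $V_{G'}$, the modification $F$ touches no edge incident to a branch vertex; in particular $\deg_{G\triangle F}(v)\le d$ for every branch vertex $v$, and $(G\triangle F)[V_{G'}]=G'\triangle F$.

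Now suppose for contradiction that $G\triangle F$ contains an induced copy of $H$ on a vertex set $W$, and let $W'\subseteq W$ be the set of vertices mapped to $V'$ by the isomorphism. Each vertex of $W'$ has degree more than $d$ inside $(G\triangle F)[W]\cong H$, hence degree more than $d$ in $G\triangle F$, so $W'$ contains no branch vertex, i.e.\ $W'\subseteq V_{G'}$. Consequently $(G\triangle F)[W']=(G'\triangle F)[W']$, and this graph is isomorphic to $H[V']=H'$; thus $G'\triangle F$ contains an induced $H'$, contradicting that $F$ is a solution for $(G',k)$. Hence $G\triangle F$ is $H$-free, and together with Lemma~\ref{lem:con:nonadj-backward} this gives the claimed linear parameterized reduction. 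I expect the step needing the most care to be not the degree counting but the bookkeeping in Construction~\ref{con:nonadj} that makes it precise — that branch vertices are genuinely local (no $E_i$ joins branch vertices of different branches or bases, and no edge of any $E_i$ lies inside $V_{G'}$) — since both the degree bound and the identity $(G\triangle F)[V_{G'}]=G'\triangle F$ rely on it.
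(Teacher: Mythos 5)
Your proposal is correct and follows essentially the same route as the paper: use Construction~\ref{con:nonadj} as the reduction, invoke Lemma~\ref{lem:con:nonadj-backward} for one direction, and for the other direction use the fact that branch vertices have degree at most $d$ in $G$ (and hence in $G\triangle F$, since $F\subseteq\edges[V_{G'}]$) to force every vertex playing a role in $V'$ of an induced $H$ into $V_{G'}$, yielding an induced $H'$ in $G'\triangle F$. Your write-up merely makes explicit the locality bookkeeping that the paper leaves implicit.
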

\begin{proof}
  Let $(G',k)$ be an instance of \HDEE~(\textsc{Deletion}/\textsc{Completion}).
  Apply Construction~\ref{con:nonadj} on $(G',k,H,V')$ to obtain $G$.
  We claim that $(G',k)$ is a yes-instance of \HDEE~(\textsc{Deletion}/\textsc{Completion})
  if and only if $(G,k)$ is a yes-instance of \HEE~(\textsc{Deletion}/\textsc{Completion}).

  Let $F'$ be a solution of size at most $k$ of $(G',k)$. For a contradiction,
  assume that $G\triangle F'$ has an induced $H$ with a vertex set $U$.
  Since a branch vertex has degree at most $d$, every vertex in $U$
  with degree more than $d$ in $(G\triangle F')[U]$ must be from $V_{G'}$.
  Hence there is an induced $H'$ in $G'\triangle F'$, which is a contradiction.
  Lemma~\ref{lem:con:nonadj-backward} proves the converse.\qed
\end{proof}
\section{\HEE}
\label{sec:editing}

In this section, we prove that \HEE\ is \NPC\
if and only if $H$ is a graph with at least three vertices.
We also prove that these problems cannot be solved in 
parameterized subexponential time unless ETH fails.
We use the following known results.

\begin{proposition}
  \label{pro:editing-base}
  The following problems are \NPC. Furthermore, they cannot be 
  solved in time $2^{o(k)}\cdot |G|^{O(1)}$, unless ETH fails.
  \begin{enumerate}[(i)]
  \item\label{item:editing-p3} \PTEE~\cite{komusiewicz2012cluster}.
  \item\label{item:editing-p4} \PFEE~[Follows from the proof of the lower 
    bound of \textsc{$\{C_4,P_4\}$-free Edge Editing} in \cite        
    {drange2015trivially}\footnote{We thank P{\aa}l Gr{\o}n{\aa}s Drange for pointing out this and sharing a complete proof of the same.}].
  \item\label{item:editing-cycle} \CLEE, for any fixed $l\geq 3$~[Follows
    from the proof for the corresponding \textsc{Deletion} problems in \cite{DBLP:journals/siamcomp/Yannakakis81}].
  \item\label{item:editing-2k2} \TWKTEE~[(\ref{item:editing-cycle}) and 
    Proposition~\ref{pro:equivalence}(\ref{item:ee-equivalence})].
  \item\label{item:editing-diamond} \DEE~\cite{DBLP:journals/classification/BarthelemyB01}.
  \end{enumerate}
\end{proposition}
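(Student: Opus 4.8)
The plan is to obtain all five items from reductions already available in the literature, checking in each case the two properties beyond plain NP-hardness that we actually need: (a) that the reduction is a \emph{linear} parameterized reduction, so that Proposition~\ref{pro:lpr} transports the ETH-based lower bound of the source problem; and (b) for the one item whose cited source is stated only for the deletion variant, that the same construction still certifies hardness once edge \emph{additions} are also permitted. No new gadget is needed; the content is the bookkeeping for (a) and (b).

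First I would import items~(\ref{item:editing-p3}), (\ref{item:editing-p4}) and (\ref{item:editing-diamond}), all three of which can be extracted from editing hardness results in their cited sources, so that only~(a) is at issue. A graph is $P_3$-free exactly when each of its components is a clique, hence \PTEE\ coincides with \textsc{Cluster Editing}, and the reduction of Komusiewicz and Uhlmann~\cite{komusiewicz2012cluster} produces a budget $k$ linear in the input size, which makes it a linear parameterized reduction; Proposition~\ref{pro:lpr} then yields the $2^{o(k)}$ bound. For~(\ref{item:editing-p4}), the $P_4$-free graphs are the cographs, and the lower-bound argument of Drange et~al.~\cite{drange2015trivially} for trivially perfect (i.e.\ $\{C_4,P_4\}$-free) editing applies essentially verbatim to \PFEE, as the footnote records; one verifies that its parameter is again linear. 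For~(\ref{item:editing-diamond}), Barth\'elemy and Brucker~\cite{DBLP:journals/classification/BarthelemyB01} prove \DEE\ \NPC\ by a reduction whose budget, once inspected, is bounded linearly in the input size, so Proposition~\ref{pro:lpr} upgrades NP-hardness to the stated ETH lower bound.

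Next I would treat item~(\ref{item:editing-cycle}) by revisiting Yannakakis's proof~\cite{DBLP:journals/siamcomp/Yannakakis81} that $C_\ell$-free Edge Deletion is \NPC. Two checks are needed. First, the budget output by his construction is a fixed linear function of the parameter of the source problem, so Proposition~\ref{pro:lpr} applies once the source is known to be ETH-hard. Second---the step I expect to be the real work---I must show that on a no-instance the constructed graph cannot be made $C_\ell$-free by \emph{any} set of at most $k$ edits, not merely that it cannot be made so by $k$ deletions; the converse is immediate, since a deletion set is an edit set. For $\ell=3$ this is trivial, because $C_3=K_3$ is complete and inserting an edge can never destroy a triangle, so an optimal editing solution may be assumed to use deletions only. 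For $\ell\ge 4$ an inserted edge could a~priori destroy an induced $C_\ell$ by chording it, so I would argue, gadget by gadget in Yannakakis's graph, that any inserted edge either creates a fresh induced $C_\ell$---making the insertion wasteful---or can be swapped for deleting an edge of the chorded copy at no greater cost; hence an optimal editing set of size at most $k$ yields a deletion set of size at most $k$, and the deletion equivalence carries over unchanged. This gives \CLEE\ \NPC\ together with the $2^{o(k)}\cdot|G|^{O(1)}$ lower bound for every fixed $\ell\ge 3$.

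Finally, item~(\ref{item:editing-2k2}) follows at once from item~(\ref{item:editing-cycle}): since $\overline{2K_2}=C_4$ and $C_4$-free Edge Editing is the $\ell=4$ case of item~(\ref{item:editing-cycle}), Proposition~\ref{pro:equivalence}(\ref{item:ee-equivalence}) gives that \TWKTEE\ is \NPC\ and admits no $2^{o(k)}\cdot|G|^{O(1)}$ algorithm unless ETH fails. The one genuine obstacle in the whole proposition is the deletion-to-editing transfer of item~(\ref{item:editing-cycle}); everything else reduces to a citation plus a linearity check.
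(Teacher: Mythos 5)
Your proposal is correct and follows essentially the same route as the paper, which offers no independent proof of this proposition but justifies each item exactly as you do: citing \cite{komusiewicz2012cluster}, \cite{drange2015trivially} and \cite{DBLP:journals/classification/BarthelemyB01} for items (i), (ii) and (v), adapting Yannakakis's deletion construction to editing for item (iii), and deriving item (iv) from item (iii) via the complementation equivalence of Proposition~\ref{pro:equivalence}(\ref{item:ee-equivalence}), since $\overline{2K_2}=C_4$. Your added bookkeeping (linearity of the parameters, and the deletion-to-editing transfer for $C_\ell$) is precisely the verification the paper leaves implicit in its bracketed attributions.
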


In our previous work \cite{DBLP:conf/cocoa/AravindSS15},
we proved that \RED\ is \NPC\ if $R$ is a regular graph
with at least two edges. We also proved that these
\NPC\ problems cannot be solved in parameterized 
subexponential time, unless ETH fails.
We observe that the results for \RED\ follows for
\REE\ as well. The proofs are very similar except
that we use Construction~\ref{con:nonadj} instead of its
ancestor in \cite{DBLP:conf/cocoa/AravindSS15} and we
reduce from \textsc{Editing} problems instead of \textsc{Deletion}
problems. We can use \PTEE, \CLEE\ and \TWKTEE\ as the base cases instead of their
\textsc{Deletion} counterparts.
We skip the proof as it will 
be a repetition of that in \cite{DBLP:conf/cocoa/AravindSS15}.

\begin{lemma}
  \label{lem:editing-reg-temp}
  Let $R$ be a regular graph with at least two edges.
  Then \REE\ is \NPC. Furthermore, the problem cannot be
  solved in time $2^{o(k)}\cdot |G|^{O(1)}$, unless ETH fails.
\end{lemma}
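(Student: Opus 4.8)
The plan is to transcribe, essentially verbatim, the proof of the \textsc{Deletion} analogue in \cite{DBLP:conf/cocoa/AravindSS15}, using Construction~\ref{con:nonadj} in place of its ancestor there and reducing from the \textsc{Editing} base problems of Proposition~\ref{pro:editing-base} rather than from their \textsc{Deletion} counterparts. Since $R$ is regular with at least two edges it is $d$-regular for some $d\ge 1$. If $R\cong C_\ell$ for some $\ell\ge 3$ (this already covers $R\cong K_3$) or $R\cong 2K_2$, the statement is immediate from Proposition~\ref{pro:editing-base}(iii)--(iv). For every other regular $R$ with at least two edges I would fix a proper vertex subset $V'\subsetneq V(R)$ for which $H':=R[V']$ is one of $P_3$, $K_3$, $2K_2$, as follows.
\begin{enumerate}[(a)]
\item If $R$ is not a disjoint union of cliques, then $R$ has an induced $P_3$; take $V'$ to be the vertex set of one such copy. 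Here $|V(R)|\ge 4$, because a regular graph on three vertices is $\overline{K_3}$ or $K_3$ and neither has an induced $P_3$; hence $V'\subsetneq V(R)$.
\item If $R$ is a disjoint union of cliques, then regularity forces $R\cong mK_n$. Since $R\not\cong K_3,\,2K_2$, either $n\ge 3$ and we take $V'$ inducing a $K_3$ inside one clique (so $|V'|=3<mn$), or $n=2$ and $m\ge 3$ and we take $V'$ inducing a $2K_2$ (so $|V'|=4<2m$).
\end{enumerate}
In each of these cases \HDEE\ is \NPC\ and not solvable in time $2^{o(k)}\cdot|G'|^{O(1)}$ under ETH, by Proposition~\ref{pro:editing-base}(i),(iii),(iv).

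Given an instance $(G',k)$ of \HDEE, I would apply Construction~\ref{con:nonadj} to $(G',k,R,V')$ to obtain $G$ and prove that $(G',k)$ is a yes-instance of \HDEE\ if and only if $(G,k)$ is a yes-instance of \REE. The parameter is unchanged, so the reduction is a linear parameterized reduction; combined with Proposition~\ref{pro:lpr} and the trivial membership of \REE\ in \NP, this gives both conclusions of the lemma. The direction ``$(G,k)$ a yes-instance $\Rightarrow$ $(G',k)$ a yes-instance'' is exactly Lemma~\ref{lem:con:nonadj-backward}, applied with $H=R$.

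The reverse direction, ``$(G',k)$ a yes-instance $\Rightarrow$ $(G,k)$ a yes-instance'', is the step I expect to be the main obstacle, and is where most of \cite{DBLP:conf/cocoa/AravindSS15} is reused. Given a solution $F$ of size at most $k$ with $G'\triangle F$ free of $H'$, one shows that the very same $F$, regarded as a subset of $\edges[V_{G'}]$, already makes $G\triangle F$ free of $R$. As the construction adds no edge inside $V_{G'}$ (every newly added edge is incident to a branch vertex), $(G\triangle F)[V_{G'}]=G'\triangle F$, so it suffices to prove that every induced copy of $R$ in $G\triangle F$ contains an induced copy of $H'=R[V']$ lying entirely inside $V_{G'}$. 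The difficulty is precisely the regularity of $R$: the degree argument of Lemma~\ref{lem:degree}, which forces the vertices of $R$ of degree exceeding $d$ into $V_{G'}$ because branch vertices have degree at most $d$, is not available, since here every branch vertex already has degree $d$ in $G$ (and in $G\triangle F$, as $F$ touches no branch vertex). Instead one exploits the rigidity of Construction~\ref{con:nonadj} --- each branch vertex of a base $C$ is adjacent only to vertices of $V(C)$ and to other vertices of its own branch, and the $k+1$ copies of $R$ attached to $C$ pairwise intersect exactly in $C$, so an editing of size at most $k$ cannot destroy all of them --- to pin any induced copy of $R$ in $G\triangle F$ onto a base $C$, forcing $(G\triangle F)[V(C)]\cong H'$ with $V(C)\subseteq V_{G'}$, the desired contradiction. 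This combinatorial analysis is carried out exactly as in \cite{DBLP:conf/cocoa/AravindSS15}, which is why the details are omitted.
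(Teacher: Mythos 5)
There is a genuine gap, and it is not merely a missing detail: the concrete reduction you specify is incorrect whenever $R$ is disconnected, a case squarely covered by the lemma (regular graphs such as $2K_3$, $tK_2$ with $t\geq 3$, or $C_4\cup C_4$). Take your case (b) with $R=2K_3$, so $V'$ induces a $K_3$ and each branch $V_i$ is a copy of the other $K_3$; since the component of $R$ containing $V'$ is disjoint from $V(R)\setminus V'$, the set $E_i$ contains no edge between $V_i$ and the base, and since Construction~\ref{con:nonadj} never adds edges between distinct branches, every branch is a ``floating'' triangle with no edges to the rest of $G$. Any two such floating triangles induce $2K_3=R$ in $G$, and they are untouched by any $F'\subseteq \edges[V_{G'}]$. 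So with $G'=K_3$ and $k=1$, the instance $(G',1)$ is a yes-instance of \pname{$K_3$-free Edge Editing}, while $(G,1)$ is a no-instance of \REE\ (one edit cannot destroy all pairs among the many floating triangles); the direction ``$(G',k)$ yes $\Rightarrow$ $(G,k)$ yes'' fails. The same phenomenon kills case (a) for, e.g., $R=C_4\cup C_4$ with $V'$ an induced $P_3$ (each branch contains a floating $C_4$) and case (b) for $tK_2$, $t\geq3$. The underlying point is that any component of $R$ disjoint from $V'$ reappears verbatim, isolated, inside every branch, so $G$ contains induced copies of $R$ that have nothing to do with $G'$.

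Two further remarks. First, your justification of the forward direction appeals to the wrong mechanism: ``the $k+1$ copies of $R$ pairwise intersecting in $C$ cannot all be destroyed by $k$ edits'' is precisely the pigeonhole argument of Lemma~\ref{lem:con:nonadj-backward}, i.e., the \emph{backward} direction; the forward direction instead requires showing that no induced $R$ in $G\triangle F'$ can be assembled from branch vertices (possibly of several branches) together with vertices of $V_{G'}$, and this needs a genuine case analysis even for connected $R$. Second, this is why the paper (following the construction's ancestor in \cite{DBLP:conf/cocoa/AravindSS15}) does not run a single application of the construction on an arbitrary regular $R$ with a three- or four-vertex $V'$: the cited proof treats the largest (regular) component first and handles disjoint unions by a separate argument (compare the ``largest component'' phrasing in Proposition~\ref{pro:deletion-base}(iii)). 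Your write-up omits that ingredient entirely, so the claim ``the combinatorial analysis is carried out exactly as in \cite{DBLP:conf/cocoa/AravindSS15}'' does not rescue the reduction as you have set it up.
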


Now, we strengthen the above lemma by proving the same results for all 
regular graphs with at least three vertices.

\begin{lemma}
  \label{lem:editing-reg}
  Let $R$ be a regular graph with at least three vertices.
  Then \REE\ is \NPC. Furthermore, the problem cannot be
  solved in time $2^{o(k)}\cdot |G|^{O(1)}$, unless ETH fails.
\end{lemma}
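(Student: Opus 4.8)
The plan is to bootstrap from Lemma~\ref{lem:editing-reg-temp}, which already settles every regular graph with at least two edges; so I would begin by isolating the regular graphs on at least three vertices that are \emph{not} covered there, namely those with at most one edge. A regular graph with exactly one edge must be $K_2$, since its two endpoints have degree one and this forbids any further vertex; and a regular graph with no edge is $tK_1$ for some $t$. Hence the only case left to handle is $R = tK_1$ with $t \geq 3$.

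For $R = tK_1$ I would pass to the complement. Here $\overline{R} = K_t$, which is $(t-1)$-regular and has $\binom{t}{2} \geq 3 \geq 2$ edges, so Lemma~\ref{lem:editing-reg-temp} applies and shows that \pname{$K_t$-free Edge Editing} is \NPC\ and admits no algorithm running in time $2^{o(k)}\cdot|G|^{O(1)}$ unless ETH fails. Since $tK_1 = \overline{K_t}$, I would then invoke Proposition~\ref{pro:equivalence}(\ref{item:ee-equivalence}) with $H = K_t$ to transfer both statements to \pname{$tK_1$-free Edge Editing}, i.e.\ to \REE. Combined with Lemma~\ref{lem:editing-reg-temp} for all remaining regular graphs on at least three vertices (those with at least two edges), this yields the lemma.

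There is essentially no obstacle here: the substantive work is contained in Lemma~\ref{lem:editing-reg-temp}, and the only things to verify are the short case analysis identifying $tK_1$ ($t \geq 3$) as the sole new case and the trivial bound $\binom{t}{2}\geq 2$ needed to invoke the complement lemma. If a self-contained treatment of the $tK_1$ case were preferred, one could instead rerun the Construction~\ref{con:nonadj}-based proof of Lemma~\ref{lem:editing-reg-temp} directly for $R = tK_1$, taking \CLEE\ with $\ell = 3$ (which, via Proposition~\ref{pro:equivalence}(\ref{item:ee-equivalence}), corresponds to $3K_1$-free editing) as the base case; the complement route above is simply shorter.
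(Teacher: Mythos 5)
Your proof is correct and follows essentially the same route as the paper: reduce to Lemma~\ref{lem:editing-reg-temp} when $R$ has at least two edges, observe that a regular graph on at least three vertices with at most one edge must be edgeless, and handle that case by passing to the complement $K_t$ via Proposition~\ref{pro:equivalence}(\ref{item:ee-equivalence}). Your case analysis ruling out the one-edge case ($K_2$) just makes explicit what the paper calls straightforward.
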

\begin{proof}
  If $R$ has at least two edges then the statements follows from 
  Lemma~\ref{lem:editing-reg-temp}. Assume that
  $R$ has at most one edge and at least three vertices. It is 
  straight-forward to see that $R$ must be the null graph.
  Then the complement of $R$ is a complete graph with at least two edges. 
  Now, the statements follows from Proposition~\ref{pro:equivalence}(\ref{item:ee-equivalence})
  and Lemma~\ref{lem:editing-reg-temp}.\qed
\end{proof}

Having these results in hand, we use Lemma~\ref{lem:degree} to prove the 
dichotomy result and the parameterized lower bound of \HEE.
Given a graph $H$ with at least three vertices, we introduce a method 
Editing-Churn($H$) to 
obtain a graph $H'$ such that there is a linear parameterized reduction
from \HDEE\ to \HEE\ and $H'$ is a graph with at least three vertices
and is a regular graph or a $P_3$ or a $P_4$ or a diamond.

\defstage{Editing-Churn($H$)}
{ $H$ is a graph with at least three vertices.
  \begin{enumerate}[Step 1:]
    \item\label{item:churn-ed1} If $H$ is a regular graph, a $P_3$, a $P_4$ or a diamond, then return $H$.
    \item\label{item:churn-ed2} If $H$ is a graph in which the number of vertices with degree more than $\delta(H)$
      is at most two, then let $H=\overline{H}$ and goto Step~\ref{item:churn-ed1}.
    \item\label{item:churn-ed3} Delete all vertices with degree $\delta(H)$ in $H$ and go to Step~\ref{item:churn-ed1}.
  \end{enumerate}
}

\begin{observation}
  \label{obs:editing-churn}
  Let $H$ be a graph with at least three vertices.
  Then Editing-Churn($H$) returns a graph $H'$ which has
  at least three vertices and is a regular graph or a $P_3$ or a $P_4$
  or a diamond. Furthermore, there is a linear parameterized reduction
  from \HDEE\ to \HEE.
\end{observation}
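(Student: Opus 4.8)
The plan is to analyze the procedure Editing-Churn($H$) step by step, showing (a) that it terminates, (b) that the returned graph $H'$ has the claimed form, and (c) that each transformation performed along the way is backed by a linear parameterized reduction, so that composition yields the desired reduction from \HDEE\ to \HEE. The termination and structural parts are essentially bookkeeping; the content of the observation is entirely in verifying that Steps~\ref{item:churn-ed2} and~\ref{item:churn-ed3} preserve linear-parameter-reducibility in the right direction.

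First I would argue termination and the shape of the output. Each pass either stops at Step~\ref{item:churn-ed1} (returning a regular graph, $P_3$, $P_4$, or diamond) or proceeds. Step~\ref{item:churn-ed3} strictly decreases the number of vertices whenever it is applied, because in a non-regular graph the set $V_{\delta(H)}$ of minimum-degree vertices is nonempty and proper; Step~\ref{item:churn-ed2} does not change the vertex count but is immediately followed by Step~\ref{item:churn-ed1}, and complementation is not applied twice in a row without an intervening Step~\ref{item:churn-ed1} check (after complementing, if we do not stop we go to Step~\ref{item:churn-ed3}, which strictly shrinks the graph). Hence the vertex count is nonincreasing and strictly decreases infinitely often unless we halt, so the procedure terminates. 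Throughout I must track that we never drop below three vertices: a graph on fewer than three vertices has at most one edge, hence is regular (it is $K_1$, $2K_1$, $K_2$, $\overline{K_2}$, or $K_1 \cup K_1$), so it would already have been returned at Step~\ref{item:churn-ed1}; I should confirm that Step~\ref{item:churn-ed3}, applied to a non-regular graph on at least four vertices, cannot leave fewer than three vertices — and if it could, that graph would be regular and the previous iteration's Step~\ref{item:churn-ed1} would have caught it. A careful case check here is the one place where a subtle off-by-one could bite, so that is the main obstacle to get right.

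Next I would handle the reductions. Step~\ref{item:churn-ed2} replaces $H$ by $\overline{H}$; by Proposition~\ref{pro:folklore} (and Proposition~\ref{pro:equivalence}) the problems \pname{$H$-free Edge Editing} and \pname{$\overline{H}$-free Edge Editing} are linear parameter equivalent, so this step contributes a linear parameterized reduction in the required direction. Step~\ref{item:churn-ed3} replaces $H$ by $H[V']$ where $V'$ is exactly the set of vertices of degree more than $d := \delta(H)$; this is precisely the hypothesis of Lemma~\ref{lem:degree} with that choice of $d$, which gives a linear parameterized reduction from \pname{$H[V']$-free Edge Editing} to \pname{$H$-free Edge Editing}. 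Since linear parameterized reductions compose (and composing $O(k)$ bounds a constant number of times still yields $O(k)$, the number of iterations being a constant depending only on the fixed graph $H$), chaining these reductions from the final returned $H'$ back up through every transformation yields a linear parameterized reduction from \pname{$H'$-free Edge Editing} to \pname{$H$-free Edge Editing}. Finally I would note that when the procedure stops, $H'$ is, by the guard of Step~\ref{item:churn-ed1}, a regular graph, a $P_3$, a $P_4$, or a diamond, and has at least three vertices by the invariant established above; this is exactly the statement, and combined with Lemma~\ref{lem:editing-reg} and Proposition~\ref{pro:editing-base} it will furnish the \NPC-ness and the ETH lower bound for \HEE\ in the theorem that follows. \qed
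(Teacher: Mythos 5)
There is a genuine gap: your termination argument silently assumes exactly the claim that needs proof. You write that ``complementation is not applied twice in a row \ldots after complementing, if we do not stop we go to Step~3,'' but nothing in the procedure guarantees this. After Step~2 replaces $H$ by $\overline{H}$ and control returns to Step~1, it is a priori possible that Step~2's condition (at most two vertices of degree exceeding the minimum degree) holds for $\overline{H}$ as well, in which case the procedure complements back and ping-pongs forever, never reaching the vertex-deleting Step~3. The paper's proof is devoted precisely to excluding this: if Step~2 is applicable to both an intermediate graph $H'$ and $\overline{H'}$, then at most two vertices of $H'$ have degree above $\delta(H')$ and at most two have degree below $\Delta(H')$; since $H'$ is not regular every vertex falls in one of these two sets, so $|V(H')|\le 4$, and a short case check of the graphs on three or four vertices satisfying these conditions shows $H'$ (or its complement) is already a regular graph, a $P_3$, a $P_4$ or a diamond, contradicting the fact that Step~1 did not return. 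Without this argument the procedure is not even guaranteed to return a graph, so the structural conclusion and the composed reduction both hang on it.

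A secondary, easily repaired weakness: your handling of the ``at least three vertices'' invariant is left as an open worry and the fallback you sketch (``if Step~3 could leave fewer than three vertices, that graph would be regular and the previous iteration's Step~1 would have caught it'') is circular --- the graph produced by Step~3 is a new graph, not one that was inspected at the previous Step~1. The correct and simple justification is that Step~3 is executed only when Step~2's condition fails, i.e.\ when at least three vertices have degree more than $\delta(H)$, and these are exactly the vertices that survive the deletion. Your treatment of the reductions themselves (Proposition~\ref{pro:equivalence}(\ref{item:ee-equivalence}) for Step~2, Lemma~\ref{lem:degree} with $d=\delta(H)$ for Step~3, and composition over the constantly many iterations) matches the paper and is fine.
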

\begin{proof}
  At any stage of the method, we make sure that the graph has at least three vertices. 
  Let $H'$ be an intermediate graph obtained in the method such that it is neither a regular graph
  nor a $P_3$ nor a $P_4$ nor a diamond. If Step 2 is applicable to both $H'$ and $\overline{H'}$,
  then $H$ hat at most four vertices. Hence $H$ has either three or four vertices. 
  It is straight-forward to verify that a graph (with three or four vertices) or its complement,
  satisfying the condition in Step 2,
  is either a regular graph or a $P_3$ or a $P_4$ or a diamond, which is a contradiction.
  The linear parameterized reduction from \HDEE\ to \HEE\ follows from
  Proposition~\ref{pro:equivalence}(\ref{item:ee-equivalence}) and Lemma~\ref{lem:degree}.\qed
\end{proof}

\begin{theorem}
  \label{thm:editing}
  \HEE\ is \NPC\ if and only if $H$ is a graph with 
  at least three vertices. Furthermore, these \NPC\
  problems cannot be solved in time $2^{o(k)}\cdot |G|^{O(1)}$, unless ETH fails.
\end{theorem}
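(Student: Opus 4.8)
The plan is to separate the two directions and then assemble the pieces already in place. For the polynomial-time (``only if'') direction, I would simply observe that a graph on at most two vertices has at most one edge, so Proposition~\ref{pro:polynomial}(\ref{item:poly-editing}) already gives that \HEE\ is polynomial-time solvable in that case; contrapositively, if \HEE\ is \NPC\ then $H$ must have at least three vertices.

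For the (``if'') direction, suppose $H$ has at least three vertices. I would run Editing-Churn($H$) and let $H'$ be its output. By Observation~\ref{obs:editing-churn}, $H'$ has at least three vertices and is a regular graph, a $P_3$, a $P_4$, or a diamond, and moreover there is a linear parameterized reduction from \HDEE\ to \HEE. It therefore suffices to show, for each admissible $H'$, that \HDEE\ is \NPC\ and admits no $2^{o(k)}\cdot|G|^{O(1)}$ algorithm unless ETH fails: if $H'$ is a regular graph on at least three vertices this is Lemma~\ref{lem:editing-reg}; if $H'$ is a $P_3$, a $P_4$, or a diamond it is Proposition~\ref{pro:editing-base}(\ref{item:editing-p3}), (\ref{item:editing-p4}), and (\ref{item:editing-diamond}) respectively. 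To conclude, I would compose: since the reduction of Observation~\ref{obs:editing-churn} runs in polynomial time, \HEE\ is \NPH, and since it is a \emph{linear} parameterized reduction, Proposition~\ref{pro:lpr} transfers the ETH-based lower bound from \HDEE\ to \HEE. Membership in \NP\ is immediate (for fixed $H$ one can verify a solution in polynomial time), so \HEE\ is \NPC.

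The only genuine work is what is packaged inside Observation~\ref{obs:editing-churn}, so I expect that to be the main obstacle rather than any fresh construction. One must check that Editing-Churn terminates: each application of Step~2 or Step~3 either strictly decreases $|V(H)|$ or, in the Step~2 case, cannot be applied twice in a row without landing in the target class, and since $H$ is fixed only a constant number of iterations occur. One must also check that the procedure cannot get stuck at a graph outside the four target families, which is the short case analysis on graphs with three or four vertices. Because the iteration count depends only on the fixed graph $H$, composing the per-iteration reductions (each either from Lemma~\ref{lem:degree} or from Proposition~\ref{pro:equivalence}(\ref{item:ee-equivalence}), both of which leave $k'=k$ unchanged) still yields a linear parameterized reduction — and it is precisely this linearity, together with Proposition~\ref{pro:lpr}, that makes the subexponential lower bound survive the composition. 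Getting this bookkeeping right is the crux; the NP-hardness half, by contrast, follows for free once polynomial-time reductions are chained.
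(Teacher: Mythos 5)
Your proposal is correct and follows essentially the same route as the paper: dispose of the at-most-two-vertex case via Proposition~\ref{pro:polynomial}(\ref{item:poly-editing}), then run Editing-Churn($H$), invoke Observation~\ref{obs:editing-churn} for the linear parameterized reduction from \HDEE\ to \HEE, and finish with the base cases in Proposition~\ref{pro:editing-base}(\ref{item:editing-p3}), (\ref{item:editing-p4}), (\ref{item:editing-diamond}) and Lemma~\ref{lem:editing-reg}. Your added remarks on termination of the churn procedure and on the composition of reductions remaining linear are exactly the content the paper packages into Observation~\ref{obs:editing-churn}, so nothing is missing.
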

\begin{proof}
  If $H$ is a graph with at most two vertices, 
  the statements follows from Proposition~\ref{pro:polynomial}(\ref{item:poly-editing}).
  Let $H$ be a graph with at least three vertices.
  Let $H'$ be the graph returned by Editing-Churn($H$).
  By Observation~\ref{obs:editing-churn}, $H'$ is either a regular graph or a $P_3$
  or a $P_4$ or a diamond and there is a linear parameterized reduction from
  \HDEE\ to \HEE.
  Now, the statements follows from the lower bound results for these graphs 
  (\ref{pro:editing-base}(\ref{item:editing-p3}), (\ref{item:editing-p4}), (\ref{item:editing-diamond}) and
  Lemma~\ref{lem:editing-reg}).\qed
\end{proof}
\section{\HED}
\label{sec:deletion}

In this section, we prove that
\HED\ is \NPC\ if and only if $H$ is a graph with 
at least two edges. We also prove that
these \NPC\ problems cannot be solved in parameterized subexponential time,
unless ETH fails. Then, from Proposition~\ref{pro:equivalence}(\ref{item:dc-equivalence}), we
obtain a dichotomy result for \HEC. We apply a technique similar to that we 
applied for \textsc{Editing} in the last section. 

\begin{proposition}
  \label{pro:deletion-base}
  The following problems are \NPC. Furthermore, they cannot be 
  solved in time $2^{o(k)}\cdot |G|^{O(1)}$, unless ETH fails.
  \begin{enumerate}[(i)]
  \item\label{item:deletion-p3} \PTED~\cite{komusiewicz2012cluster}.
  \item\label{item:deletion-diamond} \DED~\cite{DBLP:journals/disopt/FellowsGKNU11,DBLP:conf/ipec/SandeepS15}.
  \item\label{item:deletion-tree-reg} \HED, if $H$ is a graph with at least two edges and
    has a largest component which is a regular graph or a tree~\cite{DBLP:conf/cocoa/AravindSS15}.
  \end{enumerate}
\end{proposition}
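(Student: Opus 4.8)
The plan here is minimal: all three items are already available in the literature, and the purpose of this proposition is only to record the base cases from which Section~\ref{sec:deletion} will bootstrap, exactly as Proposition~\ref{pro:editing-base} records the base cases for \textsc{Editing}. So what I would write as a proof is essentially a set of pointers to the cited sources, together with one short sanity check.

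First, for item~(i) I would observe that the $P_3$-free graphs are precisely the disjoint unions of complete graphs, so \PTED\ is literally \textsc{Cluster Deletion}; its NP-completeness together with the absence of a $2^{o(k)}\cdot|G|^{O(1)}$ algorithm under ETH is proved by Komusiewicz and Uhlmann~\cite{komusiewicz2012cluster}, and I would cite that directly. For item~(ii), the NP-completeness of \DED\ is due to Fellows et al.~\cite{DBLP:journals/disopt/FellowsGKNU11} and the matching ETH-based lower bound to~\cite{DBLP:conf/ipec/SandeepS15}, so again I would simply invoke these. For item~(iii), this is exactly one of the theorems of our earlier paper~\cite{DBLP:conf/cocoa/AravindSS15}, restated here for convenience.

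The single point that deserves a line of care is the following. Later sections propagate these hardness results through the linear parameterized reductions of Lemma~\ref{lem:degree} (and Proposition~\ref{pro:equivalence}), and by Proposition~\ref{pro:lpr} the subexponential-time lower bound transfers only along reductions that are \emph{linear} in the parameter. Hence I would add a sentence noting that in each of~\cite{komusiewicz2012cluster}, \cite{DBLP:conf/ipec/SandeepS15} and \cite{DBLP:conf/cocoa/AravindSS15} the lower bound is witnessed by a reduction whose output solution size is $O(k)$ in the source parameter, so that the ETH consequence stated in the proposition is legitimate as a starting point for our chain of reductions. That bookkeeping remark is the only ``obstacle,'' and it is routine rather than mathematical.
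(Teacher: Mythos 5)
Your proposal matches the paper exactly: the paper offers no proof for this proposition, treating all three items as known results established in the cited references \cite{komusiewicz2012cluster}, \cite{DBLP:journals/disopt/FellowsGKNU11}, \cite{DBLP:conf/ipec/SandeepS15} and \cite{DBLP:conf/cocoa/AravindSS15}, which is precisely what you do. Your extra remark that the cited lower bounds are compatible with the later linear parameterized reductions is harmless bookkeeping and does not change the approach.
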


The following Lemma is a consequence of Lemma~\ref{lem:degree} and Proposition~\ref{pro:equivalence}(\ref{item:dc-equivalence}).

\begin{lemma}
  \label{lem:rotate}
  Let $H$ be any graph. Then the following hold true:
  \begin{enumerate}[(i)]
  \item Let $H'$ be the subgraph of $H$ obtained by removing all
    vertices with degree $\delta(H)$.
    Then there is a linear parameterized reduction from \HDED\ to \HED.
  \item Let $H'$ be the subgraph of $H$ obtained by removing all
    vertices with degree $\Delta(H)$.
    Then there is a linear parameterized reduction from \HDED\ to \HED. 
  \end{enumerate}
\end{lemma}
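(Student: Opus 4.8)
The plan is to derive both statements directly from Lemma~\ref{lem:degree} together with Proposition~\ref{pro:equivalence}(\ref{item:dc-equivalence}), since Lemma~\ref{lem:degree} already hands us a linear parameterized reduction from \HDEE~(\textsc{Deletion}) to \HEE~(\textsc{Deletion}) whenever $H'=H[V']$ and $V'$ is the set of vertices of $H$ of degree \emph{more than} some threshold $d$. The only work is to choose $d$ appropriately in each case and to handle the fact that in part~(i) we want to remove the \emph{minimum}-degree vertices rather than keep only the \emph{high}-degree ones directly.

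First I would prove part~(ii), which is the cleaner of the two. Set $d=\Delta(H)-1$. Then the set $V'$ of vertices of degree more than $d$ is exactly the set of vertices of degree $\Delta(H)$, so $H[V']$ is precisely the subgraph $H''$ obtained by \emph{keeping} the maximum-degree vertices — but that is the complement of what we want. So instead I would apply Lemma~\ref{lem:degree} to the complement graph: in $\overline{H}$, a vertex has maximum degree $\Delta(\overline H)=|V(H)|-1-\delta(H)$ exactly when it has minimum degree $\delta(H)$ in $H$. Taking $d=\Delta(\overline H)-1$, Lemma~\ref{lem:degree} gives a linear parameterized reduction from $\overline{H}[V']$-free \textsc{Deletion} to $\overline H$-free \textsc{Deletion}, where $\overline H[V']$ is the induced subgraph of $\overline H$ on the minimum-$H$-degree vertices, i.e.\ $\overline{H'}$ with $H'=H[V(H)\setminus V_{\delta(H)}]$... wait — I need to be careful about whether deleting vertices commutes with complementation: $\overline{H}[V']=\overline{H[V']}$ for any vertex subset $V'$, so this is fine. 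Then two applications of Proposition~\ref{pro:equivalence}(\ref{item:dc-equivalence}) (translating \HBEC\ back to \HED\ at both ends, via the linear parameter equivalence of Proposition~\ref{pro:folklore}) convert this into a linear parameterized reduction from \HDED\ to \HED, which is part~(i). Symmetrically, running Lemma~\ref{lem:degree} \emph{directly} on $H$ with $d=\Delta(H)-1$ keeps the max-degree vertices; to instead \emph{remove} the $\Delta(H)$-degree vertices (part~(ii)) I would again pass through the complement, where removing the $\Delta(H)$-degree vertices of $H$ equals removing the $\delta(\overline H)$-degree vertices of $\overline H$ — so part~(ii) is really the complement-mirror of part~(i) and the two proofs are the same argument run on $H$ versus $\overline H$.

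The key steps, in order, are: (1) identify the threshold $d$ so that ``degree more than $d$'' picks out exactly the complement of the vertex set we intend to delete; (2) invoke Lemma~\ref{lem:degree} on the appropriate graph among $\{H,\overline H\}$ to get a linear parameterized reduction between the corresponding \textsc{Deletion} problems; (3) use $\overline{H}[V']=\overline{H[V']}$ to rewrite the resulting complement-graph problems; (4) apply Proposition~\ref{pro:equivalence}(\ref{item:dc-equivalence}) (equivalently Proposition~\ref{pro:folklore}) at both the source and target to strip the complements and land on \HDED\ and \HED. The main obstacle — really the only subtlety — is bookkeeping the complementations correctly: checking that ``vertices of degree $\delta(H)$ in $H$'' and ``vertices of degree $\Delta(\overline H)$ in $\overline H$'' coincide, and that induced-subgraph and complement commute, so that the graph $H'$ produced is exactly the one named in the statement. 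Everything else is a direct chaining of reductions already established in the excerpt, and the linearity of the parameter is preserved because each step keeps $k$ unchanged (Construction~\ref{con:nonadj} and Propositions~\ref{pro:folklore},~\ref{pro:equivalence} all use the same $k$).
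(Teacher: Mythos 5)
There is a genuine error, and it sits in your treatment of part~(i). Removing all vertices of degree $\delta(H)$ from $H$ is \emph{exactly the same} as keeping the vertices of degree more than $\delta(H)$ (no vertex has degree below the minimum), so part~(i) is a direct application of Lemma~\ref{lem:degree} with $d=\delta(H)$ — no complementation is needed, and this is precisely how the paper proves it. Your detour through $\overline{H}$ is not only unnecessary but incorrect as executed: with $d=\Delta(\overline H)-1$, the set $V'$ of vertices of $\overline H$ with degree more than $d$ is exactly the set of vertices of degree $\delta(H)$ in $H$, so Lemma~\ref{lem:degree} hands you the graph $\overline{H}[V_{\delta(H)}]=\overline{H[V_{\delta(H)}]}$, the complement of the subgraph induced \emph{on} the minimum-degree vertices; your identification of this with $\overline{H'}$ for $H'=H[V(H)\setminus V_{\delta(H)}]$ is false in general. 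No choice of threshold repairs this: Lemma~\ref{lem:degree} only ever deletes the vertices \emph{below} a degree threshold, and in $\overline{H}$ the vertices you would need to delete for part~(i) are the \emph{maximum}-degree vertices of $\overline{H}$, so the complement route cannot yield part~(i) from Lemma~\ref{lem:degree} at all. There is also a type mismatch: invoking the \textsc{Deletion} version of Lemma~\ref{lem:degree} on $\overline{H}$ and then applying the complementation equivalence (Proposition~\ref{pro:folklore}, Proposition~\ref{pro:equivalence}(\ref{item:dc-equivalence})) at both ends produces a reduction between \textsc{Completion} problems for $H[V']$ and $H$, not the desired \textsc{Deletion}-to-\textsc{Deletion} reduction.

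You have the two parts swapped relative to what Lemma~\ref{lem:degree} permits: it is part~(ii), not part~(i), that needs the complement. There, removing the degree-$\Delta(H)$ vertices of $H$ is the same as removing the degree-$\delta(\overline H)$ vertices of $\overline H$, so one applies the \textsc{Completion} version of Lemma~\ref{lem:degree} to $\overline{H}$ with $d=\delta(\overline H)$, observes that the resulting graph is exactly $\overline{H'}$ (here induced subgraph and complement do commute, as you note), and then uses the equivalence of \HED\ with \HBEC\ at both ends to obtain a linear parameterized reduction from \HDED\ to \HED; this is the paper's argument. Your closing sentence gestures at this correct mechanism, but it is presented as the mirror of your flawed argument for (i), with no threshold specified and with the wrong (\textsc{Deletion}) version of Lemma~\ref{lem:degree} in play, so as written the proposal establishes neither part cleanly. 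The fix is short: part~(i) directly with $d=\delta(H)$ on $H$; part~(ii) via $\overline{H}$ with $d=\delta(\overline H)$ and the \textsc{Completion} version, followed by Proposition~\ref{pro:equivalence}(\ref{item:dc-equivalence}).
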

\begin{proof}
  The first part directly follows from Lemma~\ref{lem:degree} by setting $d=\delta(H)$.
  To prove the second part, consider the problem \HBEC. 
  Let $H''$ be the graph obtained by removing all vertices with degree $\delta(\overline{H})$ from $\overline{H}$. 
  Now, by Lemma~\ref{lem:degree},
  there is a linear parameterized reduction from \textsc{$H''$-free Edge Completion} to \HBEC.
  We observe that $H''$ is $\overline{H'}$. Hence, by Proposition~\ref{pro:equivalence}(\ref{item:dc-equivalence}),
  there is a linear parameterized reduction from \HDED\ to \HED.\qed
\end{proof}

Given a graph $H$, we keep on deleting either the minimum degree vertices or the 
maximum degree vertices by making sure that the resultant graph has at least 
two edges. We do this process until we obtain a graph
in which vertices with degree more than $\delta(H)$ induces a graph with 
at most one edge and vertices with degree less than $\Delta(H)$ induces
a graph with at most one edge. We call this method Deletion-Churn.

\defstage{Deletion-Churn($H$)}
{ $H$ is a graph with at least two edges.
  \begin{enumerate}[Step 1:]
    \item\label{item:churn-del1} If $H$ is a graph in which the vertices with degree more than $\delta(H)$
      induces a subgraph with at most one edge and the vertices with degree less than $\Delta(H)$ induces
      a subgraph with at most one edge, then return $H$.
    \item\label{item:churn-del2} If $H$ is a graph in which the vertices with degree more than $\delta(H)$
      induces a subgraph with at least two edges, then delete all vertices with degree $\delta(H)$ from $H$ and goto Step 1.
    \item\label{item:churn-compl3} If $H$ is a graph in which the vertices with degree less than $\Delta(H)$
      induces a subgraph with at least two edges, then delete all vertices with degree $\Delta(H)$ from $H$. Goto Step 1.
  \end{enumerate}
}

\begin{observation}
  \label{obs:deletion-churn}
  Let $H$ be a graph with at least two edges. If the vertices with degree more than $\delta(H)$ 
  induces a graph with at most one edge and the vertices with degree less than $\Delta(H)$ induces
  a graph with at most one edge, then $H$ is either regular graph or a forest or a sparse $(\ell,h)$-degree graph.
\end{observation}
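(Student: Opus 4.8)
The plan is to split on whether $H$ is regular. If $\delta(H)=\Delta(H)$ then $H$ is regular and there is nothing to prove, so write $\delta:=\delta(H)<\Delta:=\Delta(H)$ and let $A$ be the set of vertices of degree more than $\delta$ and $B$ the set of vertices of degree less than $\Delta$; by hypothesis $H[A]$ and $H[B]$ each have at most one edge. The first thing I would record is the elementary fact that $A\cup B=V(H)$: a vertex of degree exactly $\delta$ lies in $B$ (because $\delta<\Delta$), a vertex of degree exactly $\Delta$ lies in $A$, and a vertex of degree strictly between $\delta$ and $\Delta$ lies in both. So the whole argument is driven by the three-way classification of vertices by degree, and in particular by whether the "strictly between" class $A\cap B$ is empty.

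Next I would handle the case $A\cap B=\emptyset$. Then every vertex has degree exactly $\delta$ or exactly $\Delta$, so $H$ is a $(\delta,\Delta)$-degree graph with $V_\delta=B$ and $V_\Delta=A$; since $H[V_\delta]=H[B]$ and $H[V_\Delta]=H[A]$ each contain at most one edge, $H$ is by definition a sparse $(\delta,\Delta)$-degree graph, which is one of the allowed outcomes.

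The remaining case is $A\cap B\neq\emptyset$, where I claim $H$ is a forest. Fix $v\in A\cap B$. Since $v\in A$ and $H[A]$ has at most one edge, $v$ has at most one neighbour in $A$; likewise $v$ has at most one neighbour in $B$; and since $N(v)\subseteq A\cup B$, this forces $\deg(v)\le 2$. As $\deg(v)>\delta$, we conclude $\delta\le 1$. Now suppose toward a contradiction that $H$ contains a cycle $Z$ (necessarily of length at least $3$): every vertex of $Z$ has degree at least $2>\delta$ and hence lies in $A$, so the at least three edges of $Z$ all belong to $H[A]$, contradicting that $H[A]$ has at most one edge. Therefore $H$ is acyclic, i.e., a forest.

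I do not expect a real obstacle here; the only mildly delicate point is noticing that a single vertex of intermediate degree simultaneously forces $\deg(v)\le 2$ \emph{and} $\delta\le 1$, which is precisely what makes the cycle-counting argument work, and that the three outcomes (regular, forest, sparse $(\ell,h)$-degree graph) need not be mutually exclusive. (Incidentally, the hypothesis that $H$ has at least two edges is stated for context with Deletion-Churn but is not used in the proof.)
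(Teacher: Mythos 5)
Your proof is correct and rests on the same two key facts as the paper's: a vertex of intermediate degree lies in both sets and hence has at most one neighbour in each, forcing degree at most $2$, and any cycle would be forced entirely into the set of vertices of degree more than $\delta(H)$, contradicting that this set induces at most one edge. The only difference is organizational: the paper first uses the two-edge hypothesis to conclude $\delta(H)\geq 1$ and then splits on $\delta(H)=1$ versus $\delta(H)\geq 2$, whereas you split on whether a vertex of intermediate degree exists, which (as you note) lets you avoid the two-edge hypothesis altogether.
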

\begin{proof}
  Assume that $H$ is not a regular graph.
  Since $H$ has at least two edges and it satisfies the premises, $\delta(H)\geq 1$.
  If $\delta(H)=1$, the premises imply that $H$ is a forest. Assume that $\delta(H)\geq 2$.
  Then we prove that $H$ is a sparse $(\ell,h)$-degree graph.
  For a contradiction, assume that there exists a vertex $v\in V(H)$ such that $\delta(H)<deg(v)<\Delta(H)$.
  The premises imply that $v$ has degree at most two, which is a contradiction.\qed
\end{proof}

\begin{lemma}
  \label{lem:deletion-churn}
  Let $H$ be a graph with at least two edges. Then Deletion-Churn($H$)
  returns a graph $H'$ such that:
  \begin{enumerate}[(i)]
  \item\label{item:deletion-churn-red} There is a linear parameterized reduction from \HDED\ to \HED.
  \item\label{item:deletion-churn-output} $H'$ has at least two edges and is either a regular graph
    or a forest or a sparse $(\ell, h)$-degree graph.
  \end{enumerate}
\end{lemma}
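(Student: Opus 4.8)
The plan is to argue by induction on the number of vertices of $H$, tracking the two conclusions together. The base case is when Step~1 of Deletion-Churn applies: here $H'=H$, the reduction in part~(\ref{item:deletion-churn-red}) is the identity (trivially linear parameterized), and part~(\ref{item:deletion-churn-output}) is exactly the content of Observation~\ref{obs:deletion-churn} together with the hypothesis that $H$ has at least two edges. So the real work is in the inductive step, where either Step~2 or Step~3 fires.

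Suppose Step~2 applies: the vertices of degree more than $\delta(H)$ induce a subgraph $H''$ with at least two edges, and we recurse on $H''$. First I would check that $H''$ has at least two edges (immediate, by the Step~2 condition) so that the recursive call is legitimate and, by the induction hypothesis, Deletion-Churn($H''$) returns some $H'$ with the desired structural property and a linear parameterized reduction from \TDED\ (with $H''$ playing the role there) — more precisely from $H'$-free Edge Deletion to $H''$-free Edge Deletion. Then I compose this with the reduction from $H''$-free Edge Deletion to \HED\ furnished by Lemma~\ref{lem:rotate}(i) (note $H''$ is exactly ``$H$ with all degree-$\delta(H)$ vertices removed''). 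A composition of two linear parameterized reductions is again a linear parameterized reduction — both are polynomial time and both have $k'=O(k)$, so $k''=O(k)$ — giving part~(\ref{item:deletion-churn-red}). Part~(\ref{item:deletion-churn-output}) passes through unchanged from the induction hypothesis since the returned graph is the same $H'$.

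The case where Step~3 applies is entirely symmetric: the vertices of degree less than $\Delta(H)$ induce a subgraph $H''$ with at least two edges, we recurse on $H''$, and we compose the inductively obtained reduction with the reduction from $H''$-free Edge Deletion to \HED\ given by Lemma~\ref{lem:rotate}(ii) (again $H''$ is precisely ``$H$ with all degree-$\Delta(H)$ vertices removed''). The same composition argument finishes part~(\ref{item:deletion-churn-red}), and part~(\ref{item:deletion-churn-output}) is inherited. Finally I would note that Deletion-Churn terminates: each of Steps~2 and~3 strictly decreases $|V(H)|$ (they delete a nonempty set of vertices — $\delta(H)\le\Delta(H)$ and there is always at least one vertex of each extreme degree), so the recursion bottoms out at Step~1 after finitely many iterations, and along the way the ``at least two edges'' invariant is maintained exactly because Steps~2 and~3 only fire when the surviving induced subgraph already carries two edges.

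The only mildly delicate point — and the one I would be most careful about — is the bookkeeping that the graph passed to each recursive call still has at least two edges, since this is the standing hypothesis of both the lemma and of Observation~\ref{obs:deletion-churn}, and it is what ultimately guarantees the output is not a null or single-edge graph. But this is exactly what the guards in Steps~2 and~3 enforce (they require the relevant induced subgraph to have at least two edges before deleting anything), so there is no real obstacle; the proof is essentially a clean induction gluing together Lemma~\ref{lem:rotate} and Observation~\ref{obs:deletion-churn}.
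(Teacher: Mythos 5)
Your proof is correct and is essentially the paper's argument spelled out in full: the paper likewise invokes Lemma~\ref{lem:rotate} for each deletion step (implicitly composing the reductions across iterations), notes that the guards in Steps~2 and~3 preserve the two-edge invariant, and appeals to Observation~\ref{obs:deletion-churn} for the structure of the output. Your explicit induction, composition argument, and termination check are just a more careful write-up of the same approach.
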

\begin{proof}
  In every step, we make sure that there are at least two edges in the resultant graph.
  Now, the first part follows from Lemma~\ref{lem:rotate} and the second part follows
  from Observation~\ref{obs:deletion-churn}.\qed
\end{proof}

If the output of Deletion-Churn($H$), $H'$ is a regular graph or a forest,
we obtain from Proposition~\ref{pro:deletion-base}(\ref{item:deletion-tree-reg}) that
\HED\ is \NPC\ and cannot be solved in parameterized subexponential
time, unless ETH fails.
Therefore, the only graphs to be handled now is the sparse $(\ell,h)$-degree graphs with 
at least two edges. We do that in the next two subsections.
\subsection{\TDDED}
\label{sec:tdiamond}

We recall that $t$-diamond is the graph $K_2+tK_1$ and that 
2-diamond is the diamond graph. Clearly, $t$-diamond is a sparse $(\ell, h)$-degree graph.
In this subsection, we prove that \TDDED\ is \NPC. Further, we 
prove that the problem cannot be solved in parameterized subexponential time,
unless ETH fails. We use an inductive proof where the base case is \DED.
For the proof, we introduce a simple construction, which is 
given below.

\begin{figure}[h]
  \centering
  \includegraphics[width=1.0in]{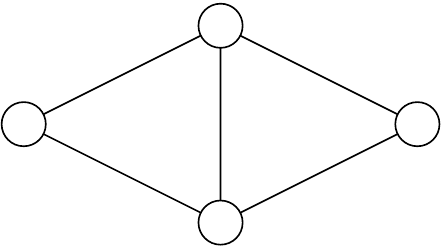}
  \caption{A 2-diamond is isomorphic to a diamond graph.}
  \label{fig:diamond}
\end{figure}

\begin{construction}
  \label{con:diamond}
  Let $(G',k)$ be an input to the construction.
  For every edge $\{u,v\}$ in $G'$, introduced a clique $C_{\{u,v\}}$ of $k+1$
  vertices such that every vertex in $C_{\{u,v\}}$ is adjacent to both $u$ and $v$.
  This completes the construction. Let $G$ be the resultant graph.
\end{construction}

\begin{lemma}
  \label{lem:tdiamond}
  For any $t\geq 2$, \TDDED\ is \NPC. Furthermore,
  the problem cannot be solved in time $2^{o(k)}\cdot |G|^{O(1)}$, unless
  ETH fails.
\end{lemma}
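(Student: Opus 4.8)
The plan is to prove Lemma~\ref{lem:tdiamond} by induction on $t$, with base case $t=2$ given by Proposition~\ref{pro:deletion-base}(\ref{item:deletion-diamond}), which states that \DED\ (i.e.\ \TDDED\ for $t=2$, since the $2$-diamond is the diamond) is \NPC\ and admits no $2^{o(k)}\cdot|G|^{O(1)}$ algorithm unless ETH fails. For the inductive step I would give a linear parameterized reduction from \TODDED\ to \TDDED, using Construction~\ref{con:diamond}. So assume the result for $(t-1)$-diamond and let $(G',k)$ be an instance of \TODDED. Apply Construction~\ref{con:diamond} to $(G',k)$ to obtain $G$, and output $(G,k)$. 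Since $t$ is fixed, this runs in polynomial time, and the parameter is unchanged, so it is a linear parameterized reduction; by Proposition~\ref{pro:lpr} the lower bound transfers.

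The heart of the argument is the equivalence: $(G',k)$ is a yes-instance of \TODDED\ iff $(G,k)$ is a yes-instance of \TDDED. The intuition is that attaching a clique $C_{\{u,v\}}$ of $k+1$ common neighbours to each edge $\{u,v\}$ of $G'$ converts every edge of $G'$ into the ``central $K_2$'' of many $t$-diamonds: together with any $t$ of the $k+1$ clique vertices, $u$ and $v$ induce a $t$-diamond. For the forward direction, suppose $F'$ is a solution of size $\le k$ for $(G',k)$; I claim the same edge set $F'$ (viewed inside $G$) works for $(G,k)$. One must check that $G\triangle F'$ has no induced $t$-diamond. Any induced $t$-diamond has a central edge whose two endpoints are adjacent to $t\ge 2$ pairwise non-adjacent vertices; a short case analysis on how many of its vertices lie in the attached cliques versus in $V_{G'}$ — using that each $C_{\{u,v\}}$ is a clique (so it cannot contain $2$ non-adjacent vertices, hence at most a bounded number of the ``outer'' vertices of the diamond) and that clique vertices have a restricted neighbourhood (only $u$, $v$, and the rest of $C_{\{u,v\}}$) — forces the central $K_2$ to be an edge of $G'\triangle F'$ with $t-1$ common non-neighbours already inside $V_{G'}$, contradicting that $F'$ is a solution.

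For the converse, suppose $F$ is a solution of size $\le k$ for $(G,k)$. I would first argue we may assume $F$ touches no edge inside any clique $C_{\{u,v\}}$ and no edge between a clique vertex and $u$ or $v$ — intuitively, the $k+1$ "parallel" clique vertices behave like the $k+1$ copies in Construction~\ref{con:nonadj}: since $|F|\le k$, for each edge $\{u,v\}$ of $G'\triangle F$ at least one clique vertex of $C_{\{u,v\}}$ is left untouched and still adjacent to both $u,v$, so modifications there cannot help destroy the relevant $t$-diamonds and can be ``pushed'' onto $G'$. Then $F':=F\cap \edges[V_{G'}]$ has size $\le k$, and I claim $G'\triangle F'$ is $(t-1)$-diamond-free: an induced $(t-1)$-diamond in $G'\triangle F'$ on vertex set $\{u,v\}\cup W$ (central edge $\{u,v\}$, $|W|=t-1$, $W$ independent and fully adjacent to $u,v$) extends, via one surviving untouched vertex $w^\ast\in C_{\{u,v\}}$ adjacent to both $u$ and $v$ and non-adjacent to every vertex of $W$ (clique vertices of $C_{\{u,v\}}$ see nothing outside $C_{\{u,v\}}\cup\{u,v\}$), to an induced $t$-diamond in $G\triangle F$ — contradiction. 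The main obstacle I anticipate is making the ``normalization'' of $F$ (redirecting edits inside the attached gadget onto $V_{G'}$) fully rigorous while controlling the diamond-counting in the forward direction; both ultimately rest on the pigeonhole fact that $k+1$ parallel gadget vertices survive $k$ edits, exactly as in Lemma~\ref{lem:con:nonadj-backward}, so the bookkeeping is routine but needs care for small values of $t$ and degenerate cases (e.g.\ when $G'$ itself has few edges).
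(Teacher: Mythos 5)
Your proposal is correct and follows essentially the same route as the paper: induction on $t$ with \DED\ as the base case, the reduction via Construction~\ref{con:diamond}, a case analysis on where the two high-degree vertices of a putative $t$-diamond lie for the forward direction, and the pigeonhole over the $k+1$ clique vertices of $C_{\{u,v\}}$ for the converse. The ``normalization'' of $F$ you anticipate needing is in fact superfluous: since this is a deletion problem, taking $F'=F\cap E(G')$ and extending a $(t-1)$-diamond by a clique vertex whose two edges to $u,v$ survive (guaranteed by pigeonhole) already works, which is exactly the paper's argument.
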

\begin{proof}
  The proof is by induction on $t$. If $t=2$, the problem is \DED\ and the 
  theorem follows from Proposition~\ref{pro:deletion-base}(\ref{item:deletion-diamond}). 
  Assume that $t\geq 3$ and that the statements hold true for $t-1$.
  We give a reduction from \TODDED\ to \TDDED. 

  Let $(G',k)$ be an instance of \TODDED. Apply Construction~\ref{con:diamond}
  on $(G',k)$ to obtain $G$. We claim that $(G',k)$ is a yes-instance of \TODDED\
  if and only if $(G,k)$ is a yes-instance of \TDDED. 

  Let $(G',k)$ be a yes-instance of \TODDED. Let $F'$ be a solution of size 
  at most $k$ of $(G',k)$. We claim that $F'$ is a solution of $(G,k)$.
  For a contradiction, assume that $G-F'$ has an induced $t$-diamond on a 
  vertex set $U\subseteq V(G)$. Let $x$ and $y$ be the $(t+1)$-degree vertices in 
  the $t$-diamond induced by $U$ in $G-F'$. Now there are three cases to be considered.
  
  Case 1: Both $x$ and $y$ are 
  from a clique $C_{\{u,v\}}$ introduced in the construction. 

  We note that $x$ and $y$
  are adjacent to $u$ and $v$ and all other vertices in $C_{\{u,v\}}$. Hence the common neighborhood
  of $x$ and $y$ does not have an independent set of size at least $3$, which is a contradiction.

  Case 2: Let $x$ is from a clique $C_{\{u,v\}}$ introduced in the construction and $y$ be $u$.

  The common neighborhood
  of $x$ and $y$ does not have an independent set of size at least $3$, which is a contradiction.

  Case 3: Both $x$ and $y$ are from $G'$. The common neighborhood of $x$ and $y$ in $G-F'$
  is constituted by $C_{\{x,y\}}$ and the common neighbors of $x$ a and $y$ in $G'-F'$.
  Since $C_{\{x,y\}}$ is a clique, it can contribute at most one to the independent set of the 
  common neighborhood of $x$ and $y$. Hence, there should be an independent set of size at least $t-1$
  in the common neighborhood of $x$ and $y$ in $G'-F'$. Since $G'-F'$ is $(t-1)$-diamond-free, this is a contradiction.

  Conversely, let $F$ be a solution of size at most $k$ of $(G,k)$. We prove that $G'-F$ is $(t-1)$-diamond-free.
  For a contradiction, assume that $G'-F$ has an induced $(t-1)$-diamond on a vertex set $U\subseteq V(G')$.
  Let $x$ and $y$ be the $t$-degree vertices of the $(t-1)$-diamond induced by $U$ in $G'-F$.
  Since there are $k+1$ common neighbors of $x$ and $y$ ($C_{\{x,y\}}$) introduced by the construction, 
  there exists a common neighbor $z\in C_{\{x,y\}}$ such that $U\cup \{z\}$
  induces a $t$-diamond in $G-F'$, which is a contradiction.\qed
\end{proof}

\subsection{Handling sparse $(\ell,h)$-degree graphs}
\label{sec:sparselh}

We recall that for $h>\ell$, 
every vertex of a sparse $(\ell, h)$-degree graph $H$ is either of degree $\ell$
or of degree $h$ and that $V_{\ell}$ induces a graph with at most one edge and $V_h$
induces a graph with at most one edge. We have already handled $t$-diamond graphs.
We handle the rest of the sparse $(\ell,h)$-degree graphs in this subsection.
Let $H$ be any sparse $(\ell,h)$-graph.
There are four cases to be handled:
\begin{description}
\item[Case 1:] $V_{h}$ is an independent set; $V_{\ell}$ is an independent set
\item[Case 2:] $V_{h}$ induces a graph with one edge; $V_{\ell}$ is an independent set
\item[Case 3:] $V_{h}$ is an independent set; $V_{\ell}$ induces a graph with one edge
\item[Case 4:] $V_{h}$ induces a graph with one edge; $V_{\ell}$ induces a graph with one edge
\end{description}

\begin{observation}
  \label{obs:sparselh}
  Let $H$ be a sparse $(\ell,h)$-graph with at least two edges. Then the following hold true:
  \begin{enumerate}[(i)]
  \item\label{item:sparselh-1} If $\ell=1$, then $H$ is a forest.
  \item\label{item:sparselh-2} If $\ell\geq 2$, then $|V_{\ell}|\geq 2$ and the equality holds only when $H$ is a diamond.
  \end{enumerate}
\end{observation}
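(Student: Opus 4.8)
The plan is to prove both parts by short degree‑counting arguments, using only the sparseness condition and the inequality $h>\ell$.

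For part (i), I would argue by contradiction: if $\ell=1$ and $H$ contained a cycle on vertices $v_1,\dots,v_t$ with $t\ge 3$, then each $v_i$ meets two cycle edges, so $\deg_H(v_i)\ge 2$, and since every degree of $H$ is $1$ or $h$ this forces $v_i\in V_h$ for all $i$; but then $H[V_h]$ contains the $t\ge 3$ cycle edges, contradicting that $V_h$ induces at most one edge. Hence $H$ is acyclic, i.e.\ a forest.

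For part (ii), assume $\ell\ge 2$, so $h\ge \ell+1\ge 3$. First I would note $V_h\neq\emptyset$: otherwise $H$ is $\ell$-regular and $V_{\ell}=V(H)$ induces $H$ itself, which has at least two edges, contradicting sparseness. The key observation is then: for any $u\in V_h$, sparseness gives $u$ at most one neighbour inside $V_h$, so $h=\deg_H(u)\le |V_{\ell}|+1$, whence $|V_{\ell}|\ge h-1\ge \ell\ge 2$, proving the first claim. For the equality case $|V_{\ell}|=2$, write $V_{\ell}=\{v_1,v_2\}$; then $h-1\le |V_{\ell}|=2$ together with $h\ge 3$ forces $h=3$. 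A vertex of $V_h$ with no neighbour in $V_h$ would have degree $\le|V_{\ell}|=2<h$, so every vertex of $V_h$ has exactly one neighbour in $V_h$; since $H[V_h]$ has at most one edge, this means $V_h=\{a,b\}$ with $a\sim b$. The only candidate neighbours of $a$ are $b,v_1,v_2$, so $\deg_H(a)=3$ forces $a\sim v_1$ and $a\sim v_2$; symmetrically $b\sim v_1$ and $b\sim v_2$. Finally, since $v_1\in V_{\ell}$ with $\ell<h=3$ and $v_1$ already has neighbours $a,b$, we get $v_1\not\sim v_2$ and $\ell=2$. Thus $V(H)=\{v_1,v_2,a,b\}$ and $H$ is $K_4$ minus the edge $v_1v_2$, i.e.\ $H=K_2+2K_1$, the diamond; conversely the diamond is visibly a sparse $(2,3)$-degree graph with $|V_{\ell}|=2$.

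I do not expect a real obstacle here — the content is just a careful accounting of where the neighbours of a degree-$h$ vertex can lie (at most one inside $V_h$, all others inside $V_{\ell}$), together with checking that the degenerate configurations ($V_h=\emptyset$, or $H$ on too few vertices) are excluded by $h\ge 3$ and sparseness. The mildly delicate point is extracting, in the equality case, that $H[V_h]$ must be exactly a single edge, rather than an empty graph or a larger matching, before pinning down all remaining adjacencies.
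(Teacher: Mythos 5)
Your proof is correct and follows essentially the same route as the paper's (much terser) argument: cycles must lie inside $V_h$, which induces at most one edge, and a degree count on a vertex of $V_h$ (at most one neighbour in $V_h$, the rest in $V_{\ell}$) forces $|V_{\ell}|\geq h-1\geq\ell$, with the boundary case pinned down to the diamond. Your version merely spells out the details the paper leaves implicit, including the harmless extra check that $V_h\neq\emptyset$.
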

\begin{proof}
  To prove the first part, we observe that $H\setminus V_{\ell}$
  has at most one edge. 
  To prove the second part, we observe that if $|V_{\ell}|\leq 2$ and if $H$ is not a diamond, then $h\leq \ell$,
  which is a contradiction.\qed
\end{proof}

Since the case of forest is already handled in Proposition~\ref{pro:deletion-base}(\ref{item:deletion-tree-reg}),
we can safely assume that $\ell\geq 2$ and hence $h\geq 3$.
We start with handling Case 1.
We use a slightly modified 
version of Construction~\ref{con:nonadj}. 
We recall that, in Construction~\ref{con:nonadj}, with an input $(G',k,H,V')$,
For every copy $C$ of $H[V']$ in $K'$ (a complete graph on $V(G')$), we introduced $k+1$ branches such that
each branch along with $C$ form a copy of $H$. 
In the modified
construction, in addition to this, we make every pair of vertices from different branches mutually adjacent.

\begin{construction}
  \label{con:adj}
  Let $(G',k, H, V')$ be an input to the construction, where $G'$ and $H$ are graphs, $k$
  is a positive integer and $V'$ is a subset of vertices of $H$.
  Apply Construction~\ref{con:nonadj} on $(G',k,H,V')$ to obtain $G''$.
  For every pair of vertices $\{v_i,v_j\}$ such that $v_i\in V_i$ and $v_j\in V_j$, where $i\neq j$,
  make $v_i$ and $v_j$ adjacent.
  This completes the construction. Let the constructed graph be $G$. 
\end{construction}

Now, we have a lemma similar to Lemma~\ref{lem:con:nonadj-backward}. 
We skip the proof as it is quite similar to that of Lemma~\ref{lem:con:nonadj-backward}. 

\begin{lemma}
  \label{lem:con:adj-backward}
  Let $G$ be obtained by Construction~\ref{con:adj} on 
  the input $(G',k,H,V')$, where $G'$ and $H$ are graphs, $k$ is a positive integer and $V'\subseteq V(H)$.
  Then, if $(G,k)$ is a yes-instance of \HED, then $(G',k)$ is a yes-instance of \HDED, where $H'$ is $H[V']$.
\end{lemma}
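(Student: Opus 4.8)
The plan is to follow the proof of Lemma~\ref{lem:con:nonadj-backward} essentially verbatim; the only new ingredient is the bookkeeping observation that the edges added by Construction~\ref{con:adj}, each of which runs between two distinct branches, never lie inside the vertex set of a single copy $H_i$ of $H$, so they play no role in the argument. Since the statement is for \HED, the solution $F$ is a set of at most $k$ edges of $G$, and $G\triangle F=(V(G),E(G)\setminus F)$.

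First I would take a solution $F$ of size at most $k$ of $(G,k)$ and, for a contradiction, suppose that $G'\triangle F$ has an induced $H'=H[V']$ on a vertex set $U$. Then $U$ together with the edges of $G'\triangle F$ inside $U$ is a subgraph of the complete graph $K'$ on $V(G')$ isomorphic to $H[V']$, hence it is one of the bases $C$ processed by Construction~\ref{con:nonadj} inside Construction~\ref{con:adj}; let $V_1,\dots,V_{k+1}$ and $E_1,\dots,E_{k+1}$ be its $k+1$ branches and the corresponding edge sets. Each $H_i=(U\cup V_i,\, E(C)\cup E_i)$ is isomorphic to $H$, and since $E(C)$ consists of edges of $G'$ (as $F$ only deletes), $H_i$ is a subgraph of $G''$ and hence of $G$. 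Next I would record two facts. Construction~\ref{con:adj} adds no vertex or edge inside $V_{G'}$, so $G$ and $G'$ agree on $V_{G'}$ and the edges of $G\triangle F$ inside $U$ are exactly $E(C)$. Moreover, the only edges of $G$ with an endpoint in $V_i$ and both endpoints in $U\cup V_i$ are those of $E_i$, because the edges that Construction~\ref{con:adj} adds at $V_i$ go to vertices of other branches, which are disjoint from $U\cup V_i$. Hence $(G\triangle F)[U\cup V_i]$ has edge set $E(C)\cup(E_i\setminus F)$, which is isomorphic to $H$ exactly when $F\cap E_i=\emptyset$.

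Finally, the sets $E_1,\dots,E_{k+1}$ are pairwise disjoint because distinct branches consist of distinct fresh vertices, so since $|F|\le k$ some index $i$ satisfies $F\cap E_i=\emptyset$; for that $i$ we obtain an induced copy of $H$ in $G\triangle F$, contradicting that $G\triangle F$ is $H$-free. Therefore no such $U$ exists and $F$ is a solution of $(G',k)$. I do not expect a genuine obstacle: this is the deletion case of the proof of Lemma~\ref{lem:con:nonadj-backward}, and the sole extra step is verifying that the cross-branch edges introduced by Construction~\ref{con:adj} stay outside every set $U\cup V_i$.
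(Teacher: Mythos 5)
Your proof is correct and follows the same route the paper intends: the paper omits the proof precisely because it is the argument of Lemma~\ref{lem:con:nonadj-backward} specialized to deletion, and your only added ingredient---that the cross-branch edges introduced by Construction~\ref{con:adj} have both endpoints in distinct branches and hence never lie inside any set $U\cup V_i$, so each $(G\triangle F)[U\cup V_i]$ is still exactly $E(C)\cup(E_i\setminus F)$---is exactly the bookkeeping needed to transfer that proof. The pigeonhole over the pairwise disjoint sets $E_1,\dots,E_{k+1}$ then finishes it as in the paper.
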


\begin{lemma}
  \label{lem:lhbipartite}
  Let $H$ be a sparse $(\ell,h)$-graph, where $h>\ell\geq 2$ such that both $V_{\ell}$ and $V_h$
  are independent sets.
  Then \HED\ is \NPC. Furthermore, the problem cannot be solved in time
  $2^{o(k)}\cdot |G|^{O(k)}$, unless ETH fails.
\end{lemma}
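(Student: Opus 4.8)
The plan is to give a linear parameterized reduction from \HDED\ to \HED\ for a suitably chosen induced subgraph $H'=H[V']$ of $H$. Since $V_\ell$ and $V_h$ are independent and every vertex of $V_h$ has all of its $h$ neighbours inside $V_\ell$, we have $|V_\ell|\ge h$, and combining this with $|V_\ell|\cdot\ell=|V_h|\cdot h$ gives $|V_h|\ge\ell$. Fix any $v\in V_\ell$ and set $V'=V_h\cup\{v\}$. Then $H':=H[V']$ consists of the star $K_{1,\ell}$ centred at $v$ (whose $\ell$ leaves are the neighbours of $v$, all of which lie in $V_h$) together with the remaining $|V_h|-\ell\ge0$ vertices of $V_h$ as isolated vertices. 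In particular $H'$ has at least two edges and its unique largest component is a tree, so by Proposition~\ref{pro:deletion-base}(\ref{item:deletion-tree-reg}), \HDED\ is \NPC\ and cannot be solved in time $2^{o(k)}\cdot|G|^{O(1)}$ unless ETH fails. Hence, since a linear parameterized reduction is in particular a polynomial-time many-one reduction, it suffices, by Proposition~\ref{pro:lpr}, to exhibit a linear parameterized reduction from \HDED\ to \HED.

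For the reduction, apply Construction~\ref{con:adj} to $(G',k,H,V')$ to obtain $G$. As $H$ is fixed this runs in polynomial time and the parameter is unchanged, so it is a linear parameterized reduction. The implication ``$(G,k)$ is a yes-instance of \HED\ $\Rightarrow$ $(G',k)$ is a yes-instance of \HDED'' is exactly Lemma~\ref{lem:con:adj-backward}. It therefore remains to show that if $F'$ is a solution of $(G',k)$ then $F'$ is also a solution of $(G,k)$.

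Suppose not: $G-F'$ contains an induced copy of $H$ on a vertex set $U$. As $H$ is biregular bipartite with $\ell<h$, every isomorphism maps $V_\ell$ onto the degree-$\ell$ part of $U$ and $V_h$ onto the degree-$h$ part, and both parts are independent sets in $G-F'$. The structural facts I would use are: (a) $F'$ contains no edge incident to a branch vertex, and in $G$ a branch vertex of a base $C$ is adjacent only to the $\ell$ base vertices of $C$ prescribed by its role and to all vertices of the other branches of $C$; (b) since $H$ is triangle-free, $U$ meets at most two branches of any single base, and since each part of $U$ is independent while two distinct branches of a common base are completely joined in $G$, $U$ meets at most one branch of any base inside its degree-$\ell$ part. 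Using (a), (b), $\ell\ge2$, and the fact that $G$ has no edges among base vertices beyond those of $G'$, I would argue that unless the degree-$\ell$ part of $U$ already contains a vertex of $V_{G'}$ (in which case we finish as below), every vertex of the degree-$h$ part of $U$ lies in $V_{G'}$, and the degree-$\ell$ part then cannot consist solely of branch vertices: a single branch has fewer than $|V_\ell|$ vertices, so no ``single-base'' all-branch degree-$\ell$ part is possible, while any copy whose branch vertices are drawn from two or more distinct bases is forced, through the base vertices it shares, to already contain an induced $H'$ inside $V_{G'}$. Hence some vertex $w$ of the degree-$\ell$ part lies in $V_{G'}$; then $\{w\}$ together with the degree-$h$ part of $U$ induces, in $G[V_{G'}]-F'=G'-F'$, exactly a $K_{1,\ell}$ centred at $w$ plus $|V_h|-\ell$ isolated vertices, i.e.\ an induced copy of $H'$, contradicting the choice of $F'$. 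This completes the linear parameterized reduction and hence the proof.

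The step I expect to be the main obstacle is ruling out the multi-base copies in the last part: configurations in which the degree-$\ell$ part of $U$ is assembled from branch vertices belonging to several distinct bases, glued together along base vertices that are simultaneously degree-$h$-role vertices of more than one of those bases. Settling this cleanly should reduce to a careful accounting of which base vertices are shared and how many branch vertices each base can contribute, again using $\ell\ge2$ and the absence of edges among base vertices beyond those of $G'$, in order to locate inside $G'-F'$ a vertex playing a $V_\ell$-role together with its $\ell$ neighbours and enough further degree-$h$-role vertices, pairwise non-adjacent, to realise $H'$.
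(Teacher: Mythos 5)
Your choice of $V'=V_h\cup\{v\}$ does not work with Construction~\ref{con:adj}, and the case you defer as the ``main obstacle'' is not merely unfinished: the implication you need (``$F'$ a solution of $(G',k)$ $\Rightarrow$ $F'$ a solution of $(G,k)$'') is false. Take $H=K_{2,3}$ (so $\ell=2$, $h=3$, and your $H'$ is a $P_3$ with centre of $V_\ell$-role), and let $G'$ be an edgeless graph, so that $F'=\emptyset$ is a solution of $(G',k)$ for every $k$. In $G$, pick a base $C_2$ whose two $V_h$-role vertices are $p,q\in V_{G'}$, and let $\{z_1,z_2\}$ be one of its branches; pick a second base $C_1$ whose $V_h$-role vertices $p',q'$ are disjoint from $\{p,q\}$, and let $a$ be one of its branch vertices. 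Then $z_1,z_2$ are nonadjacent (same branch, $V_\ell$-roles), each of them is adjacent to $p$, $q$ (its own base) and to $a$ (vertices of different branches are made adjacent), while $a$ is adjacent to neither $p$ nor $q$ (its only neighbours in $V_{G'}$ are $p',q'$), and $p\not\sim q$ since no edges inside $V_{G'}$ are ever added. Hence $\{z_1,z_2,p,q,a\}$ induces $K_{2,3}=H$ in $G-F'$, so $F'$ is not a solution of $(G,k)$; moreover such copies are abundant and cannot all be destroyed by $k$ deletions for large $G'$, so the reduction itself maps a yes-instance to a no-instance --- it is broken, not just your proof of it. (If you instead read Construction~\ref{con:adj} as joining only branches of a common base, as your point (a) suggests, an analogous copy is obtained from single branch vertices of three bases that share the $V_h$-role pair $\{x,y\}$ but have distinct centre vertices.) The root cause is exactly the configuration you left open: with your $V'$, a branch is an independent set of $V_\ell$-role vertices with a large common neighbourhood, so it can itself serve as the degree-$h$ side of a fresh copy of $H$.

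The paper takes a different route precisely to avoid this. It reduces from \PTED\ (not from \HDED\ with a star-plus-isolated-vertices graph) and applies Construction~\ref{con:adj} with $V'=\{u,v,w\}$ inducing a $P_3$ whose \emph{centre} $v$ lies in $V_h$ and whose endpoints lie in $V_\ell$; the branches then also contain the remaining high-degree vertices, and the forward direction leans heavily on the fact that $G'-F'$ is $P_3$-free (so neighbourhoods inside $V_{G'}$ are extremely constrained) to prove that the degree-$h$ part of a hypothetical copy lies in a single branch and that at most one degree-$\ell$ vertex lies in $V_{G'}$, after which a counting argument forces $H$ to be complete bipartite and yields a contradiction. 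Your source problem gives you no comparable structural control over $G'-F'$, which is why the multi-branch and multi-base copies cannot be ruled out. (Your use of Lemma~\ref{lem:con:adj-backward} for the other direction, and your verification that your $H'$ falls under Proposition~\ref{pro:deletion-base}, are fine; the failure is entirely in the direction you sketched.)
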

\begin{proof}
  We reduce from \PTED. Let $V'=\{u,v,w\}\subseteq V(H)$ be such that
  $v\in V_h$, $u,w\in V_{\ell}$ and $V'$ induces a $P_3$ in $H$.
  Since $h\geq 3$, such a subset of vertices does exist in $H$.
  Let $(G',k)$ be an instance of \PTED.
  Apply Construction~\ref{con:adj} on $(G',k,H,V')$
  to obtain $G$. Let $H'$ be $H[V']$. 
  We claim that $(G',k)$ is a yes-instance of \PTED\ if and only if 
  $(G,k)$ is a yes-instance of \HED.

  Let $(G',k)$ be a yes-instance of \PTED. Let $F'$ be a solution
  of size at most $k$ of $(G',k)$. For a contradiction, assume that
  $G-F'$ has an induced $H$ on a vertex set $U$. Let $V^U_{\ell}$ and
  $V^U_{h}$ be the $V_{\ell}$ and $V_h$ respectively of the $H$ induced by
  $U$ in $G-F'$.

  Claim 1: $V^U_{h}$ is a subset of a single branch, say $V_1$.

  Since $V^U_h$
  is an independent set in $(G-F')[U]$, 
  $V^U_{h}$ cannot span over multiple branches. Hence $V^U_{h}\subseteq V_1\cup V_{G'}$.
  Let $x\in V^U_h\cap V_{G'}$. Consider the neighborhood of $x$, $N(x)$ in $(G-F')[U]$.
  Since the neighborhood of every vertex in $H$ is triangle-free, $N(x)$ cannot 
  contain vertices from multiple branches. Further, since $G'-F'$ is $P_3$-free, 
  $N(x)$ can have at most one vertex from $V_{G'}$. Let $x$ is adjacent to vertices in $V_1$. 
  We note that, by construction, 
  $x$ has at most $h-2$ neighbors in $V_1$. Therefore $|N(x)|<h$, which is a contradiction.
  Thus we obtained that $V^U_h\subseteq V_1$.

  Claim 2: $|V^U_{\ell}\cap V_{G'}|\leq 1$

  Assume that $x\in U\cap V_{G'}$.
  Since degree of $x$ in $(G-F')[U]$ is $\ell$, $x$ must have $\ell$
  edges to $V^U_h$. Therefore, $x$ must be the middle vertex of the 
  $P_3$ formed by $B_1$ in $G'$. 

  Claim 1 and 2 imply that $|U\cap (V_1\cup V_{G'})| \leq |U|-2$. Hence,
  there exists a branch, other than
  $V_1$, say $V_2$, such that $V^U_{\ell}\cap V_2\neq \emptyset$. Since $V_{\ell}$ is an independent set,
  no other branches can have vertices in $V^U_{\ell}$. Therefore,
  $V^U_{\ell}\subseteq V_2\cup \{x\}$. Let $y$ be a vertex in $V^U_{\ell}\cap V_2$.
  Since $y$ is adjacent to all vertices in $V^U_{h}$, $\ell=|V^U_h|$.
  Hence $H$ is a complete bipartite graph. Further, $|V^U_{\ell}\cap V_2|\geq |V_{\ell}|-1$.
  It is straight-forward to verify that $V_2$ does not have an independent set of size 
  $|V_{\ell}|-1$, which is a contradiction.
  Lemma~\ref{lem:con:adj-backward} proves the converse.\qed
\end{proof}

Now we handle the cases in which $V_{\ell}$ induces a graph with one edge. 

\begin{lemma}
  \label{lem:sparselh-vl}
  Let $H$ be a sparse $(\ell,h)$-graph with at least two edges 
  such that $V_l$ induces a graph with one edge.
  Let $v_{\ell_1}$ and $v_{\ell_2}$ be the two adjacent vertices in $V_{\ell}$.
  Let $H'$ be the graph induced by $V(H)\setminus \{v_{\ell_1}, v_{\ell_2}\}$.
  Then, there is a linear parameterized reduction from \HDED\ to \HED.
\end{lemma}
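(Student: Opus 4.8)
The plan is to realise the reduction through Construction~\ref{con:nonadj}. Let $v_{\ell_1},v_{\ell_2}$ be the two adjacent vertices of $V_{\ell}$, put $V'=V(H)\setminus\{v_{\ell_1},v_{\ell_2}\}$, so that $H'=H[V']$ and $|V(H)\setminus V'|=2$. Given an instance $(G',k)$ of \HDED, apply Construction~\ref{con:nonadj} to $(G',k,H,V')$ to obtain $G$, and claim that $(G',k)$ is a yes-instance of \HDED\ if and only if $(G,k)$ is a yes-instance of \HED. Since the parameter $k$ is unchanged and the construction runs in polynomial time, this is a linear parameterized reduction. The direction ``$(G,k)$ yes $\Rightarrow$ $(G',k)$ yes'' is precisely Lemma~\ref{lem:con:nonadj-backward}, so all the work is in the forward direction. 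It is essential here to use Construction~\ref{con:nonadj} rather than Construction~\ref{con:adj}, since the argument below relies on branch vertices of distinct branches being non-adjacent.

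For the forward direction I would take a solution $F'$ of size at most $k$ of $(G',k)$; then $G'-F'$ is $H'$-free, and I view $F'$ as a set of edges of $G$ (note $F'\subseteq\edges[V(G')]$, so $F'$ touches no edge incident to a branch vertex). Suppose for contradiction that a vertex set $U$ induces a copy of $H$ in $G-F'$, and fix an isomorphism $\phi\colon (G-F')[U]\to H$. The first observation is that every branch vertex $w$ of $G$ has degree exactly $\ell$ in $G$: by construction its only neighbours are the unique other vertex of its branch and the $\ell-1$ vertices of its base to which $w$ is joined (since $\deg_H(v_{\ell_1})=\deg_H(v_{\ell_2})=\ell$ and $v_{\ell_1}\sim v_{\ell_2}$). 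As $F'$ does not affect these edges, $\deg_{(G-F')[U]}(w)\le\ell$ for every branch vertex $w\in U$, hence $\phi(w)\in V_{\ell}$ and moreover $N_G(w)\subseteq U$.

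The crucial step, and the one I expect to be the main obstacle, is to rule out that a branch vertex is mapped to a degree-$\ell$ vertex other than $v_{\ell_1}$ or $v_{\ell_2}$; this is exactly where the hypothesis that $V_{\ell}$ induces only one edge is used. If $w$, a branch vertex of a branch $V_i$, had $\phi(w)=x\in V_{\ell}\setminus\{v_{\ell_1},v_{\ell_2}\}$, then all $\ell$ neighbours of $x$ in $H$ lie in $V_h$ (as $x$ has no neighbour in $V_{\ell}$). Their $\phi$-preimages have degree $h>\ell$ in $(G-F')[U]$, so by the first observation they lie in $V_{G'}$; they are also all adjacent to $w$, hence in $N_G(w)$. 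But $N_G(w)$ meets $V_{G'}$ in only the $\ell-1$ base vertices to which $w$ is joined, a contradiction. Therefore every branch vertex of $U$ is mapped to $v_{\ell_1}$ or $v_{\ell_2}$.

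Since $v_{\ell_1}\sim v_{\ell_2}$ in $H$, any two branch vertices of $U$ are adjacent in $G$; as branch vertices of distinct branches are non-adjacent in $G$, all branch vertices of $U$ lie in a single branch $V_i$, and $|V_i|=2$ bounds their number by two. Let $W$ be $U$ with its branch vertices removed; then $W\subseteq V_{G'}$, and because $G[V_{G'}]=G'$ we have $(G-F')[W]=(G'-F')[W]$. Under $\phi$ this is isomorphic to $H[V(H)\setminus S]$ where $S=\phi(U\setminus W)\subseteq\{v_{\ell_1},v_{\ell_2}\}$; since $V'\subseteq V(H)\setminus S$, it contains an induced copy of $H'=H[V']$. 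Thus $G'-F'$ has an induced $H'$, contradicting the choice of $F'$, which finishes the forward direction and the proof.
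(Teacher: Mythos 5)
Your proof is correct and follows essentially the same route as the paper: apply Construction~\ref{con:nonadj} with $V'=V(H)\setminus\{v_{\ell_1},v_{\ell_2}\}$, use Lemma~\ref{lem:con:nonadj-backward} for one direction, and for the other argue that branch vertices in a copy of $H$ in $G-F'$ can only play the roles of $v_{\ell_1},v_{\ell_2}$, so the remaining vertices yield an induced $H'$ in $G'-F'$. Your degree argument simply fills in, carefully and correctly, the step the paper dismisses as ``straight-forward to verify,'' including the exact use of the hypothesis that $V_{\ell}$ induces only one edge.
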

\begin{proof}
  Let $(G',k)$ be an instance of \HDED. Apply Construction~\ref{con:nonadj}
  on $(G',k,H,V')$, where $V'$ is $V(H)\setminus \{v_{\ell_1}, v_{\ell_2}\}$.
  Let $G$ be the graph obtained from the construction. We claim that 
  $(G',k)$ is a yes-instance of \HDED\ if and only if $(G,k)$ is a 
  yes-instance of \HED.

  Let $(G',k)$ be a yes-instance of \HDED\ and let $F'$ be a solution of size
  at most $k$ of $(G',k)$. For a contradiction, assume that $G-F'$ has an induced
  $H$ with a vertex set $U$. It is straight-forward to verify that 
  If a branch vertex $v_1\in V_1$ is in $U$, then its neighbor in the same branch $u_1\in V_1$
  must be in $U$ and both acts as $v_{\ell_1}$ and $v_{\ell_2}$ in the $H$
  induced by $U$ in $G-F'$. Hence $'-F'$ has an induced $H'$, which is a contradiction.
  Lemma~\ref{lem:con:nonadj-backward} proves the converse.\qed
\end{proof}

\begin{observation}
  \label{obs:sparselh-vl}
  Let $H$ be a sparse $(\ell,h)$-graph with at least two edges where $h>\ell\geq 2$
  such that $V_l$ induces a graph with one edge.
  Let $v_{\ell_1}$ and $v_{\ell_2}$ be the two adjacent vertices in $V_{\ell}$.
  Let $H'$ be the graph induced by $V(H)\setminus \{v_{\ell_1}, v_{\ell_2}\}$.
  Then $H'$ has at least two edges.
\end{observation}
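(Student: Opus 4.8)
The plan is to exhibit, inside $H'$, a single vertex of degree at least two; this immediately forces $|E(H')|\ge 2$. The natural candidate is any vertex of $V_{\ell}$ other than $v_{\ell_1}$ and $v_{\ell_2}$, so the first thing to establish is that such a vertex exists, i.e. that $|V_{\ell}|\ge 3$. Since $V_{\ell}$ contains the edge $\{v_{\ell_1},v_{\ell_2}\}$ we trivially have $|V_{\ell}|\ge 2$, and Observation~\ref{obs:sparselh}(\ref{item:sparselh-2}) (applicable because $\ell\ge 2$) refines this to: $|V_{\ell}|\ge 2$, with equality only when $H$ is the diamond. But in the diamond the two minimum-degree vertices (the class $V_{\ell}=V_2$) are non-adjacent, whereas here $V_{\ell}$ induces an edge; hence $H$ is not the diamond and $|V_{\ell}|\ge 3$.

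Next I would fix any $v\in V_{\ell}\setminus\{v_{\ell_1},v_{\ell_2}\}$ and bound its degree in $H'$. Because $V_{\ell}$ induces exactly the single edge $\{v_{\ell_1},v_{\ell_2}\}$, and $v$ is not one of its endpoints, $v$ has no neighbour lying in $V_{\ell}$; therefore all $\ell$ neighbours of $v$ in $H$ lie in $V_h$. The graph $H'$ is obtained from $H$ by deleting only the two vertices $v_{\ell_1},v_{\ell_2}$, both of which belong to $V_{\ell}$, so no neighbour of $v$ is deleted and $\deg_{H'}(v)=\ell\ge 2$. A vertex of degree at least two is incident with at least two edges, so $|E(H')|\ge 2$. (If one prefers a slightly more wasteful count, the same reasoning applied to every vertex of $V_{\ell}\setminus\{v_{\ell_1},v_{\ell_2}\}$ shows $H'$ contains all $\ell(|V_{\ell}|-2)\ge 2$ edges between $V_{\ell}\setminus\{v_{\ell_1},v_{\ell_2}\}$ and $V_h$.)

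The only genuinely non-routine point is the diamond exception: one has to notice that the borderline case $|V_{\ell}|=2$ permitted by Observation~\ref{obs:sparselh} is ruled out here precisely because the diamond's minimum-degree class is an independent set, which contradicts the hypothesis that $V_{\ell}$ induces an edge. Once $|V_{\ell}|\ge 3$ is in hand, the remainder is a one-line degree computation.
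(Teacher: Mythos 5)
Your proof is correct and follows essentially the same route as the paper: invoke Observation~\ref{obs:sparselh}(\ref{item:sparselh-2}) to get $|V_{\ell}|\ge 3$ (the diamond being excluded since its minimum-degree class is independent), then observe that a remaining vertex of $V_{\ell}$ keeps all $\ell\ge 2$ of its neighbours, which lie in $V_h$. You merely spell out the diamond exclusion and the degree count that the paper leaves implicit.
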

\begin{proof}
  By Observation~\ref{obs:sparselh}(\ref{item:sparselh-2}), since $H$ is not a diamond,
  $|V_{\ell}|\geq 3$. This implies that $V\setminus \{v_{\ell_1}, v_{\ell_2}\}$ is 
  nonempty. Now the observation follows from the fact that $\ell\geq 2$.\qed
\end{proof}

Now we handle Case 2, i.e., $V_h$ induces a graph with one edge and $V_{\ell}$
is an independent set.

\begin{lemma}
  \label{lem:vh1vl0}
  Let $H$ be a sparse $(\ell,h)$ graph where $h>\ell\geq 2$, $V_h$
  induces a graph with one edge and $V_{\ell}$ is an independent set.
  Let $H$ be not a $t$-diamond.
  Let $v_{h_1}$ and $v_{h_2}$ be the two adjacent vertices in $H[V_h]$.
  Let $V'$ be $V_{\ell}\cup \{v_{h_1},v_{h_2}\}$. Let $H'$ be $H[V']$. 
  Then, there is a linear
  parameterized reduction from \HDED\ to \HED.
\end{lemma}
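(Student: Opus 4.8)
The plan is to give a linear parameterized reduction from \HDED\ to \HED\ through Construction~\ref{con:nonadj}. Given an instance $(G',k)$ of \HDED, I would apply Construction~\ref{con:nonadj} to $(G',k,H,V')$ to obtain $G$; this runs in polynomial time (as $H$ is fixed) and keeps the parameter $k$ unchanged, so it suffices to prove that $(G',k)$ is a yes-instance of \HDED\ if and only if $(G,k)$ is a yes-instance of \HED. The implication ``$(G,k)$ yes $\Rightarrow (G',k)$ yes'' is Lemma~\ref{lem:con:nonadj-backward}; the work lies in showing that if $F'$ is a solution of size at most $k$ for $(G',k)$, then $G-F'$ is $H$-free.

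First I would record the structural facts driving the argument. Put $W:=V(H)\setminus V'=V_h\setminus\{v_{h_1},v_{h_2}\}$. Since $H[V_h]$ is the single edge $v_{h_1}v_{h_2}$, the set $W$ is independent in $H$ and every $w\in W$ has all $h$ of its neighbours in $V_\ell$; also $W\neq\emptyset$, for if $W=\emptyset$ then $h>\ell\geq 2$ forces $\ell=2$ and every vertex of $V_\ell$ to be adjacent to both $v_{h_1}$ and $v_{h_2}$, so $H=K_2+|V_\ell|K_1$ is a $t$-diamond, contrary to hypothesis. Consequently, in $G$ each branch is an independent set of $|W|$ vertices, every branch vertex has degree exactly $h$ with all its neighbours inside its own base (a subset of $V_{G'}$), and branch vertices of distinct branches are non-adjacent; so the set of all branch vertices is independent in $G$. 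Finally, because we are dealing with edge deletion, $F'\subseteq E(G')$, so no edge of $F'$ is incident to a branch vertex, and hence every branch vertex still has degree exactly $h$ in $G-F'$.

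Now suppose, for contradiction, that $G-F'$ has an induced $H$ on a vertex set $U$, and let $\phi\colon (G-F')[U]\to H$ be an isomorphism. The heart of the proof is the claim that $\phi$ maps every branch vertex of $U$ into $W$ (this is vacuous when $U$ has no branch vertex, in which case $U\subseteq V_{G'}$). Granting the claim, $\phi^{-1}(V')$ is a set of $|V'|$ vertices contained in $V_{G'}$ with $(G-F')[\phi^{-1}(V')]\cong H[V']=H'$; and since $\phi^{-1}(V')\subseteq V_{G'}$ and $F'\subseteq E(G')$, this induced subgraph equals $(G'-F')[\phi^{-1}(V')]$, so $G'-F'$ has an induced $H'$ --- contradicting that $F'$ is a solution.

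It remains to prove the claim, which I expect to be the main obstacle. Take a branch vertex $x\in U$, lying in a branch $V_i$ with base $C$, and suppose $\phi(x)\in V'$. Since $x$ has degree exactly $h$ in $G-F'$, if $\phi(x)\in V_\ell$ then exactly $\ell$ of the $h$ base-neighbours of $x$ lie in $U$, while if $\phi(x)\in V_h$ then all $h$ of them lie in $U$ and are exactly the copy-neighbours of $x$ in $(G-F')[U]$. If $\phi(x)\in\{v_{h_1},v_{h_2}\}$, then one base-neighbour of $x$, say $b$, must play the other endpoint of the edge $v_{h_1}v_{h_2}$; I would then reach a contradiction by chasing the roles this forces --- on $b$, on the remaining base-neighbours of $x$ (which are forced to play $V_\ell$-vertices), on the vertices of $U$ playing $W$, and on the copies in the other $k$ branches of $C$ of the $W$-vertex represented by $x$ (which share the base-neighbourhood of $x$) --- using repeatedly that $v_{h_1}v_{h_2}$ is the unique edge inside $V_h$ and that $\phi$ is injective. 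The case $\phi(x)\in V_\ell$ is handled in the same spirit, tracking which $V_h$-vertices the $\ell$ in-$U$ base-neighbours of $x$ play. The care needed here --- and the contrast with Lemma~\ref{lem:sparselh-vl}, where the two removed vertices form an edge and hence are rigidly placed --- stems from $W$ being an \emph{independent} set of degree-$h$ vertices, so a branch vertex must be pinned to a $W$-role via the degree-$h$ bound together with the one-edge structure of $H[V_h]$, not via adjacencies among branch vertices.
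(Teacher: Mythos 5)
Your scaffolding (Construction~\ref{con:nonadj} with $V'=V_\ell\cup\{v_{h_1},v_{h_2}\}$, Lemma~\ref{lem:con:nonadj-backward} for the backward direction, parameter unchanged) coincides with the paper's, but the forward direction is where your argument stops being a proof. Your central claim --- that every branch vertex of $U$ must be mapped into $W=V_h\setminus\{v_{h_1},v_{h_2}\}$ --- is exactly the crux of the lemma, and you support it only with a promissory note (``chasing the roles this forces''). In the deletion setting there is no degree obstruction to the bad cases: a branch vertex has degree exactly $h=\Delta(H)$ in $G-F'$, so nothing prevents it from playing a degree-$\ell$ role using only $\ell$ of its $h$ base-neighbours, the other $h-\ell$ neighbours simply lying outside $U$; and in the case $\phi(x)\in\{v_{h_1},v_{h_2}\}$, all $h$ base-neighbours of $x$ lie in $U$, one of them can play the other endpoint of the $V_h$-edge, and the $W$-roles can be supplied by vertices of $V_{G'}$ or of other branches --- none of this is excluded by the facts you record (branch vertices independent, $F'\subseteq E(G')$, degree $h$). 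So the claim is unproven, and it is not at all clear that a local ``role chase'' closes it.

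The paper avoids precisely this difficulty by passing to the complement: it reduces \HDBEC\ to \HBEC\ (and then invokes Proposition~\ref{pro:equivalence}(\ref{item:dc-equivalence})), with the construction arranged so that branches complete bases into copies of $\overline{H}$. In $\overline{H}$ the vertices of $W$ have the \emph{minimum} degree $|V(H)|-1-h$, a branch vertex has exactly this degree in the constructed graph, and a completion solution adds edges only inside $V_{G'}$; hence a branch vertex occurring in an induced $\overline{H}$ must have all of its neighbours inside the copy and cannot play a $V_\ell$-role (those have the larger degree $|V(H)|-1-\ell$). This forces entire branches into the copy in the $W$-role and yields the contradiction cheaply --- the degree argument points in the right direction there, whereas in your direct deletion formulation it does not. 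To repair your write-up you must either genuinely carry out and verify the elimination of the $V_\ell$-role and $v_{h_1}/v_{h_2}$-role cases (which your sketch does not do, and which may not be possible by degree counting alone), or switch to the completion-in-the-complement reduction as the paper does.
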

\begin{proof}
  For convenience, we give a reduction from \HDBEC\ to \HBEC. Then the statements follow 
  from Proposition~\ref{pro:equivalence}(\ref{item:dc-equivalence}).

  Let $(G',k)$ be an instance of \HDBEC. Apply Construction~\ref{con:nonadj}
  on $(G',k,H,V')$, where $V'$ is $V_{\ell}\cup \{v_{h_1}, v_{h_2}\}$.
  Let $G$ be the graph obtained from the construction. We claim that 
  $(G',k)$ is a yes-instance of \HDBEC\ if and only if $(G,k)$ is a 
  yes-instance of \HBEC.

  Let $(G',k)$ be a yes-instance of \HDBEC\ and let $F'$ be a solution of size
  at most $k$ of $(G',k)$. For a contradiction, assume that $G+F'$ has an induced
  $H$ with a vertex set $U$. It is straight-forward to verify that 
  If a branch vertex $v_1\in V_1$ is in $U$, then all its neighbors in the same branch are in $U$ and
  $V_1$ acts as $V_h\setminus \{v_{h_1},v_{h_2}\}$ of $H$ in $\overline{H}$
  induced by $U$ in $G+F'$. Hence $G'+F'$ has an induced $\overline{H'}$, which is a contradiction.
  Lemma~\ref{lem:con:nonadj-backward} proves the converse.\qed  
\end{proof}

\begin{observation}
  \label{obs:vh1vl0}
  Let $H$ be a sparse $(\ell,h)$ graph where $h>\ell\geq 2$, $V_h$
  induces a graph with one edge and $V_{\ell}$ is an independent set.
  Let $H$ be not a $t$-diamond, for $t\geq 2$.
  Let $v_{h_1}$ and $v_{h_2}$ be the two adjacent vertices in $H[V_h]$.
  Let $V'$ be $V_{\ell}\cup \{v_{h_1},v_{h_2}\}$. Let $H'$ be $H[V']$. 
  Then $H'$ has at least two edges and $|V(H')| < |V(H)|$.
\end{observation}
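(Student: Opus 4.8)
The statement has two parts: the strict inequality $|V(H')| < |V(H)|$ and the claim that $H'$ has at least two edges. Since $V(H)$ is the disjoint union $V_\ell \cup V_h$ and $V' = V_\ell \cup \{v_{h_1},v_{h_2}\}$ with $v_{h_1},v_{h_2}\in V_h$, we have $|V(H')| = |V_\ell| + 2$ while $|V(H)| = |V_\ell| + |V_h|$. Hence the first part is equivalent to showing $|V_h| \geq 3$, and this is where essentially all the content lies.

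To establish $|V_h|\geq 3$, the plan is: first note $|V_h|\geq 2$ because $H[V_h]$ has an edge, and then rule out $|V_h|=2$ using the hypothesis that $H$ is not a $t$-diamond. If $V_h = \{v_{h_1},v_{h_2}\}$, then since $V_\ell$ is an independent set, every vertex of $V_\ell$ must have all $\ell$ of its neighbours inside $V_h$; this forces $\ell \le 2$, and since $\ell\geq 2$ we get $\ell=2$ with every vertex of $V_\ell$ adjacent to both $v_{h_1}$ and $v_{h_2}$. Then $H = K_2 + |V_\ell|\,K_1$, that is, a $t$-diamond with $t=|V_\ell|$; moreover $h = 1+|V_\ell|\geq 3$ forces $t\geq 2$, contradicting the hypothesis. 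So $|V_h|\geq 3$, and the first part follows.

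For the second part, observe that the edge $\{v_{h_1},v_{h_2}\}$ lies in $H'$ by definition of $V'$. Since $H[V_h]$ has exactly one edge, the only neighbour of $v_{h_1}$ in $V_h$ is $v_{h_2}$, so its remaining $h-1$ neighbours (and $h-1\geq 2$ because $h>\ell\geq 2$) all lie in $V_\ell\subseteq V'$. Thus $H'$ contains $\{v_{h_1},v_{h_2}\}$ together with at least one further edge incident to $v_{h_1}$, giving at least two edges.

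The only nontrivial step is the elimination of the $|V_h|=2$ case, which is exactly the place where the ``$H$ is not a $t$-diamond'' hypothesis is invoked; the rest is immediate from the definitions of a sparse $(\ell,h)$-degree graph together with $h>\ell\geq 2$.
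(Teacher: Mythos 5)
Your proof is correct and follows the same route as the paper, whose entire proof is the one-line remark that the observation ``follows from the facts that $h\geq 3$ and $H$ is not a $t$-diamond.'' You have simply filled in the details: $h\geq 3$ gives the two edges of $H'$ incident to $v_{h_1}$, and ruling out $|V_h|=2$ via the non-$t$-diamond hypothesis gives $|V(H')|<|V(H)|$, exactly the two facts the paper invokes.
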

\begin{proof}
  Follows from the facts that $h\geq 3$ and $H$ is not a $t$-diamond.\qed
\end{proof}

\begin{lemma}
  \label{lem:sparselh}
  Let $H$ be a sparse $(\ell,h)$-degree graph with at least two edges.
  Then \HED\ is \NPC. Furthermore, the problem cannot be solved in time
  $2^{o(k)}\cdot |G|^{O(1)}$, unless ETH fails.
\end{lemma}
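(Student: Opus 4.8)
The plan is to prove Lemma~\ref{lem:sparselh} by a case analysis on the structure of the sparse $(\ell,h)$-degree graph $H$, using the four cases identified just before Observation~\ref{obs:sparselh} together with the reductions already established in this subsection. By Observation~\ref{obs:sparselh}, if $\ell=1$ then $H$ is a forest, and the claim follows from Proposition~\ref{pro:deletion-base}(\ref{item:deletion-tree-reg}); so we may assume $\ell\geq 2$ and hence $h\geq 3$. If $H$ is a $t$-diamond for some $t\geq 2$, we are done by Lemma~\ref{lem:tdiamond}. Thus the remaining work is to handle the four cases for a sparse $(\ell,h)$-degree graph $H$ with $h>\ell\geq 2$ that is not a $t$-diamond.

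First I would dispose of Case~1 (both $V_\ell$ and $V_h$ independent) directly by Lemma~\ref{lem:lhbipartite}, which already gives NP-completeness and the ETH lower bound by a reduction from \PTED. For the cases in which $V_\ell$ induces a single edge (Cases~3 and~4), I would apply Lemma~\ref{lem:sparselh-vl}: removing the two adjacent vertices of $V_\ell$ yields a graph $H'$ with, by Observation~\ref{obs:sparselh-vl}, at least two edges and strictly fewer vertices, and a linear parameterized reduction from \HDED\ to \HED. For Case~2 ($V_h$ induces one edge, $V_\ell$ independent, $H$ not a $t$-diamond), I would apply Lemma~\ref{lem:vh1vl0}, whose associated $H'=H[V_\ell\cup\{v_{h_1},v_{h_2}\}]$ again has at least two edges and strictly fewer vertices by Observation~\ref{obs:vh1vl0}, together with a linear parameterized reduction.

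The key point is that in every case other than Case~1 we obtain a graph $H'$ with at least two edges and $|V(H')|<|V(H)|$, reducing \HED\ to $H'$-free Edge Deletion via a linear parameterized reduction. Hence I would run the whole argument as an induction on $|V(H)|$: the base of the induction consists of regular graphs, forests, $t$-diamonds, and Case~1 graphs, all settled above; in the inductive step, $H'$ is itself a sparse $(\ell',h')$-degree graph (or a forest/regular graph/$t$-diamond), so by the induction hypothesis \textsc{$H'$-free Edge Deletion} is \NPC\ and has no $2^{o(k)}\cdot|G|^{O(1)}$ algorithm unless ETH fails, and the linear parameterized reduction transfers both conclusions to \HED\ via Proposition~\ref{pro:lpr}. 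I should verify that $H'$ is indeed again a sparse $(\ell',h')$-degree graph in Cases~2, 3, 4 — or, more cleanly, simply feed $H'$ back through Deletion-Churn (Lemma~\ref{lem:deletion-churn}) so that it is reclassified as a regular graph, a forest, or a sparse $(\ell,h)$-degree graph before reapplying the case analysis; since Deletion-Churn only deletes vertices, it does not increase $|V(H')|$, keeping the induction well-founded.

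The main obstacle I anticipate is bookkeeping rather than any new idea: making the induction airtight requires checking that the "churn-and-recurse" loop always strictly decreases a well-chosen measure (vertex count suffices, since each non-trivial reduction strictly drops it and Deletion-Churn never raises it) and that the resulting graph always lands back in one of the handled classes with at least two edges. One must also be slightly careful that when Case~2, 3, or~4 applies but the reduced graph $H'$ happens to fall into an already-solved class (regular, forest, or $t$-diamond), the recursion terminates immediately rather than looping; this is automatic from the case-split structure. With these observations the proof is a routine combination of Lemmas~\ref{lem:tdiamond}, \ref{lem:lhbipartite}, \ref{lem:sparselh-vl}, \ref{lem:vh1vl0}, Observations~\ref{obs:sparselh}, \ref{obs:sparselh-vl}, \ref{obs:vh1vl0}, and Propositions~\ref{pro:lpr} and~\ref{pro:deletion-base}.
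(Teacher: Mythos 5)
Your proposal is correct and follows essentially the same route as the paper's proof: dispose of forests, $t$-diamonds, and Case~1 via Proposition~\ref{pro:deletion-base}, Lemma~\ref{lem:tdiamond}, and Lemma~\ref{lem:lhbipartite}, and otherwise iterate the reductions of Lemmas~\ref{lem:sparselh-vl} and~\ref{lem:vh1vl0} interleaved with Deletion-Churn until a base case is reached. Your explicit induction on $|V(H)|$ merely makes precise the termination argument that the paper leaves as ``repeat this process until no more repetition is possible.''
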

\begin{proof}
  If $V_{\ell}$ induces a graph with an edge, then we apply
  the technique used in Lemma~\ref{lem:sparselh-vl} and obtain
  a graph $H'$ with at least two edges. Similarly, if $H$ is 
  not a $t$-diamond and $V_h$ induces a graph with an edge,
  then we apply the technique used in Lemma~\ref{lem:vh1vl0} to obtain
  a graph $H'$ with at least two edges. If the obtained graph $H'$
  is not a sparse $(\ell,h)$-degree graph, then we apply 
  Deletion-Churn($H'$) to obtain $H''$. We repeat this process until no more
  repetition is possible. Then, it is straight-forward to verify that
  we obtain a graph which is either a $t$-diamond, or a graph handled in Lemma~\ref{lem:lhbipartite} or
  a regular graph or a forest with at least two edges.\qed
\end{proof}
%
\subsection{Dichotomy Results}

We are ready to state the dichotomy results and the 
parameterized lower bounds for \HED\ and \HEC.

\begin{theorem}
  \label{thm:final}
  \HED\ is \NPC\ if and only if $H$ is a graph with at least two edges.
  Furthermore, the problem cannot be solved in time $2^{o(k)}\cdot |G|^{O(k)}$.
  \HEC\ is \NPC\ if and only if $H$ is a graph with at least two non-edges.
  Furthermore, the problem cannot be solved in time $2^{o(k)}\cdot |G|^{O(k)}$.
\end{theorem}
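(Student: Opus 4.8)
The plan is to treat each biconditional by its two directions. The polynomial-time directions are immediate from Proposition~\ref{pro:polynomial}, so all the work is on the hardness side, and there the strategy is to assemble the machinery already developed: Deletion-Churn (Lemma~\ref{lem:deletion-churn}), the base hardness facts of Proposition~\ref{pro:deletion-base}, and the sparse $(\ell,h)$-degree case of Lemma~\ref{lem:sparselh}, tying them together with Proposition~\ref{pro:lpr}.

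For \HED: if $H$ has at most one edge, Proposition~\ref{pro:polynomial}(\ref{item:poly-deletion}) gives polynomial-time solvability. Otherwise $H$ has at least two edges; I would run Deletion-Churn($H$) to obtain a graph $H'$. By Lemma~\ref{lem:deletion-churn}, $H'$ has at least two edges, is a regular graph or a forest or a sparse $(\ell,h)$-degree graph, and there is a linear parameterized reduction from \HDED\ to \HED. If $H'$ is regular or a forest, Proposition~\ref{pro:deletion-base}(\ref{item:deletion-tree-reg}) shows \HDED\ is \NPC\ and has no $2^{o(k)}\cdot|G|^{O(1)}$ algorithm unless ETH fails; if $H'$ is a sparse $(\ell,h)$-degree graph, Lemma~\ref{lem:sparselh} gives the same. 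Composing with the reduction from Deletion-Churn and invoking Proposition~\ref{pro:lpr} transports both NP-hardness and the ETH lower bound to \HED; membership in \NP\ is trivial for fixed $H$, so \HED\ is \NPC.

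For \HEC: if $H$ has at most one non-edge, Proposition~\ref{pro:polynomial}(\ref{item:poly-completion}) applies. Otherwise $\overline{H}$ has at least two edges, so by the \HED\ statement just proved, applied to the pattern $\overline{H}$, \HBED\ is \NPC\ and admits no $2^{o(k)}\cdot|G|^{O(1)}$ algorithm unless ETH fails. Proposition~\ref{pro:equivalence}(\ref{item:dc-equivalence}), applied with $\overline{H}$ in the role of the pattern graph and using $\overline{\overline{H}}=H$, carries both conclusions over to \HEC.

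The only routine point to keep in view is that every transformation in this chain preserves the parameter up to a constant factor — the iterated vertex deletions inside Deletion-Churn (Lemma~\ref{lem:rotate}, hence Construction~\ref{con:nonadj} and Lemma~\ref{lem:degree}), the complementation equivalence of Proposition~\ref{pro:equivalence}(\ref{item:dc-equivalence}), and the final composition — so that Proposition~\ref{pro:lpr} is applicable at each stage and no step inflates $k$. The substantive difficulty is not in this theorem but in the preceding subsections: the sparse $(\ell,h)$-degree family is the sole class of patterns with at least two edges not already covered by the trees-and-regular-graphs result of \cite{DBLP:conf/cocoa/AravindSS15}, and it is Lemmas~\ref{lem:tdiamond}, \ref{lem:lhbipartite}, \ref{lem:sparselh-vl}, \ref{lem:vh1vl0} and \ref{lem:sparselh} that do the real work. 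By this point Theorem~\ref{thm:final} itself is just the bookkeeping that glues those cases to the churn procedures, so I expect no genuine obstacle in writing it — only the need to be careful that the case analysis of Deletion-Churn's output is exhaustive and that each branch lands on one of the lemmas above.
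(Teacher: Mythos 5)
Your proposal is correct and follows essentially the same route as the paper's own proof: polynomial-time cases via Proposition~\ref{pro:polynomial}, the hardness side by composing Deletion-Churn (Lemma~\ref{lem:deletion-churn}) with Proposition~\ref{pro:deletion-base}(\ref{item:deletion-tree-reg}) and Lemma~\ref{lem:sparselh}, and the completion case by complementation via Proposition~\ref{pro:equivalence}(\ref{item:dc-equivalence}). The paper states this more tersely, but the underlying argument and case analysis are identical to yours.
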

\begin{proof}
  Consider \HED. The statements follow from 
  Proposition~\ref{pro:polynomial}(\ref{item:poly-deletion}), Lemma~\ref{lem:deletion-churn}, 
  Proposition~\ref{pro:deletion-base}(\ref{item:deletion-tree-reg}) and Lemma~\ref{lem:sparselh}.
  Now the results for \HEC\ follows from Proposition~\ref{pro:equivalence}(\ref{item:dc-equivalence}).\qed
\end{proof}
\section{Concluding Remarks}

Our results have wide implications on the incompressibility of 
$H$-free edge modification problems. Polynomial parameter transformation (PPT)
is a widely used technique to prove the incompressibility of 
problems. To prove the incompressibility of a problem
it is enough to to give a PPT from a problem which is already
known to be incompressible, under some complexity theoretic assumption.
All our reductions are
linear parameterized reductions and hence are polynomial
parameter transformations. The following lemma is a
direct consequence of Lemma~\ref{lem:degree}.

\begin{lemma}
  \label{lem:incompressibility}
  Let $H$ be a graph and $d$ be any integer. 
  Let $H'$ be obtained from $H$ by deleting vertices with degree $d$ or less.
  Then, if \HDEE~(\textsc{Deletion}/\textsc{Completion})
  is incompressible, then \HEE~(\textsc{Deletion}/\textsc{Completion}) is incompressible.
\end{lemma}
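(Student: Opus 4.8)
The plan is to observe that this statement is an immediate corollary of Lemma~\ref{lem:degree}. Write $V'$ for the set of vertices of $H$ of degree strictly more than $d$; then the graph $H'$ described in the statement (obtained from $H$ by deleting every vertex of degree $d$ or less) is exactly $H[V']$. By Lemma~\ref{lem:degree} there is a linear parameterized reduction from \HDEE~(\textsc{Deletion}/\textsc{Completion}) to \HEE~(\textsc{Deletion}/\textsc{Completion}), so all that remains is to turn this into a statement about incompressibility.

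First I would recall that every linear parameterized reduction is in particular a polynomial parameter transformation: it runs in polynomial time, and the output parameter $k'$ satisfies $k' = O(k)$ and hence is polynomially bounded in $k$. In our case the reduction underlying Lemma~\ref{lem:degree} (Construction~\ref{con:nonadj}) even keeps the parameter unchanged, $k' = k$, and runs in polynomial time because $H$ is fixed, so there is nothing hidden in the constant of the linear bound.

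Then I would invoke the standard transfer principle for polynomial parameter transformations: if there is a polynomial parameter transformation from a parameterized problem $A$ to a parameterized problem $B$, and the classical versions of both lie in \NP, then a polynomial kernel (equivalently, a polynomial compression) for $B$ would, by composition with the transformation, yield a polynomial compression for $A$; hence incompressibility of $A$ implies incompressibility of $B$. As noted in Section~\ref{sec:prebasics}, every $H$-free edge modification problem belongs to \NP, so this applies verbatim with $A = $ \HDEE~(\textsc{Deletion}/\textsc{Completion}) and $B = $ \HEE~(\textsc{Deletion}/\textsc{Completion}), and the claim follows.

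There is no real obstacle here beyond bookkeeping: the whole content is already contained in Lemma~\ref{lem:degree}, and the only things to check are that a linear parameterized reduction qualifies as a polynomial parameter transformation and that the membership in \NP needed for the incompressibility transfer is available, both of which are already established in the excerpt.
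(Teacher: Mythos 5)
Your proposal is correct and follows exactly the route the paper takes: identify $H'$ with $H[V']$ for $V'$ the vertices of degree more than $d$, apply Lemma~\ref{lem:degree}, and observe that a linear parameterized reduction is a polynomial parameter transformation which transfers incompressibility (using that the target problem is in \NP). The paper simply states this as a direct consequence of Lemma~\ref{lem:degree}; your write-up makes the same argument with the bookkeeping spelled out.
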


We give a simple example to show an implication of this lemma. Consider an 
$n$-sunlet graph which is a graph in which 
a vertex with degree one is attached to each vertex of a cycle of $n$ vertices.
From the incompressibility of \textsc{$C_n$-free Edge Editing}, \textsc{Deletion} and \textsc{Completion},
for any $n\geq 4$, it follows that \textsc{$n$-sunlet-free Edge Editing}, \textsc{Deletion} and \textsc{Completion}
are incompressible for any $n\geq 4$.

We believe that our result is a step towards a dichotomy result on the incompressibility of $H$-free
edge modification problems. 
Another direction is to get a dichotomy result on the complexities of $\mathcal{H}$-free edge modification
problems where $\mathcal{H}$ is a finite set of graphs. 

%

\bibliographystyle{plain}
\bibliography{npc}
%

\end{document}